\documentclass[mnsc,nonblindrev]{informs3}

\OneAndAHalfSpacedXI

\usepackage[plainpages=false,hyperfootnotes=false]{hyperref}
\hypersetup{
  colorlinks   = true, 
  urlcolor     = blue, 
  linkcolor    = red, 
  citecolor   = blue 
}

\usepackage{natbib}
 \bibpunct[, ]{(}{)}{,}{a}{}{,}%
 \def\bibfont{\small}%
\TheoremsNumberedThrough     
\ECRepeatTheorems

\EquationsNumberedThrough    

\MANUSCRIPTNO{}

\usepackage{dsfont}
\usepackage{amssymb}
\usepackage{amsmath}
\usepackage{enumitem}
\usepackage{booktabs}
\usepackage{siunitx}
\usepackage[colorinlistoftodos]{todonotes}
\usepackage{cleveref}
\usepackage{graphicx}

\newcommand{\X}{{\boldsymbol{X}}}
\newcommand{\D}{{\boldsymbol{D}}}

\renewcommand{\d}{{\boldsymbol{d}}}
\renewcommand{\t}{{\boldsymbol{t}}}

\newcommand{\abs}[1]{\left\lvert #1 \right\rvert}  

\newcommand{\A}{\mathsf{A}}
\newcommand{\M}{{\mathcal{M}}}
\newcommand{\mN}{{\mathcal{N}}}

\newcommand{\F}{\mathfrak{F}}

\newcommand{\mfu}{{\mathfrak{u}}}

\newcommand{\Cov}{\mathrm{Cov}}

\newcommand{\bt}{{\boldsymbol{t}}}
\newcommand{\ep}{{\varepsilon}}
\newcommand{\Id}{{\mathds{1}}}

\newcommand{\E}{{\mathbb{E}}}
\newcommand{\Ex}{\mathbb{E}_{\boldsymbol{x}}}

\renewcommand{\P}{{\mathbb{P}}}
\newcommand{\Q}{{\mathbb{Q}}}
\newcommand{\Qb}{{\mathbb{Q}^{\mathfrak{b}}}}
\newcommand{\Qd}{{\mathbb{Q}^{\dagger}}}
\newcommand{\Qtilde}{{\Tilde{\mathbb{Q}}}}
\newcommand{\R}{{\mathds{R}}}
\newcommand{\Lp}{{\mathcal{L}^2}}

\newcommand{\e}{{\boldsymbol{e}}}

\newcommand{\bK}{{\boldsymbol{K}}}
\newcommand{\x}{{\boldsymbol{x}}}

\newcommand{\W}{{\boldsymbol{W}}}

\newcommand{\Z}{{\boldsymbol{Z}}}

\newcommand{\bfeta}{{\boldsymbol{\eta}}}

\renewcommand{\Finv}{{\breve{F}}}

\newcommand{\supp}[1]{{\mathrm{supp}\{#1\}}}

\usepackage[plainpages=false,hyperfootnotes=false]{hyperref}
\hypersetup{
  colorlinks   = true, 
  urlcolor     = blue, 
  linkcolor    = red, 
  citecolor   = blue 
}

\begin{document}

\RUNAUTHOR{Miao and Pesenti}

\RUNTITLE{Discrimination-Insensitive Pricing}

\TITLE{Discrimination-Insensitive Pricing}

\ARTICLEAUTHORS{%
\AUTHOR{Kathleen E.~Miao}
\AFF{Department of Statistical Sciences, University of Toronto, Canada, \EMAIL{k.miao@mail.utoronto.ca}} 
\AUTHOR{Silvana M.~Pesenti}
\AFF{Department of Statistical Sciences, University of Toronto, Canada, \EMAIL{silvana.pesenti@utoronto.ca}} 
\vspace{0.5em}
March 27, 2026
} 

\ABSTRACT{Rendering fair prices for financial, credit, and insurance products is of ethical and regulatory interest. In many jurisdictions, discriminatory covariates, such as gender and ethnicity, are prohibited from use in pricing such instruments.  In this work, we propose a discrimination-insensitive pricing framework, where we require the pricing principle to be insensitive to the (exogenously determined) protected covariates, that is the sensitivity of the pricing principle to the protected covariate is zero. We formulate and solve the optimisation problem that finds the nearest (in Kullback-Leibler (KL) divergence) ``pricing'' measure to the real world probability, such that under this pricing measure the principle is discrimination-insensitive.  We call the solution the discrimination-insensitive measure and provide conditions for its existence and uniqueness.
In situations when there are more than one protected covariates, the discrimination-insensitive pricing measure might not exist, and we propose a two-step procedure. First, for each protected covariate separately, we find the measure under which the pricing principle becomes insensitivity to that covariate. Second we reconcile these measures through a constrained barycentre model. We provide a close-form solution to this problem and give conditions for existence and uniqueness of the constrained barycentre pricing measure. As an intermediary result, we prove the representation, existence, and uniqueness of the KL barycentre of general probability measures, which may be of independent interest.
Finally, in a numerical illustration, we compare our discrimination-insensitive premia and the constrained barycentre pricing measure with recently proposed fair premia from the actuarial literature.
}%


\KEYWORDS{fairness, discrimination, sensitivity, barycentre, distortion risk measures, Kullback-Leibler} 

\maketitle

%



\section{Introduction}
\vspace{1em}
\subsection{Overview and contribution}
Ensuring a fair price is of critical interest in many domains of finance, insurance, and risk management.  In general, ``fairness" connotes that prices are statistically accurate (determined by risk), socially equitable (not determined by inherent characteristics of the client), and transparent (tractable and interpretable). Translating these notions of fairness into a mathematical setting is a lively discussion both in academia and practice.  In this work, we take ``fair", or ``free from discrimination" to be mathematically defined as the pricing principle having zero sensitivity with respect to protected characteristics.  The zero sensitivity means that the pricing principle is not impacted by distributional perturbations of the protected covariate; resulting in an insensitive (to the protected covariate) pricing principle. We consider the setting where a financial firm uses a distortion risk principle to make a risk-informed decision upon a loss model, under the exogenously imposed restriction that some characteristics are prohibited from use in pricing. Typically, pricing is a two stage framework; first, information collected from the client is used to find a reference model $\P$ that represents their losses, perhaps using a generalised linear model or more sophisticated machine learning techniques, and second pricing is performed on this model. Thus, concerns arise in the use of protected characteristics in both the modelling and the pricing portion of this procedure. In this work, we allow for all covariates to be used in estimating the loss model, however the resulting pricing framework should be insensitive to the protected covariates. Allowing the protected covariates in the modelling step results into more accurate forecasting and understanding of the risks, while requiring insensitivity of the pricing principle results in similar prices for customers with similar permitted covariates. More specifically, our notion of \textit{discrimination-insensitivity} is inspired by sensitivity analysis, as we interpret the need to be ``free of discrimination" as requiring \textit{insensitivity to the discriminatory covariate}. That is, discrimination in this work is characterised by a differential sensitivity of the risk pricing principle in direction of protected covariates, which quantifies the impact of distributional perturbations on the pricing principle.

As pricing principles are in general sensitive to protected covariates, we consider alternative probability measures such that the sensitivity of the pricing principle under these alternative probability measures is rendered null. Additionally to the insensitivity, we further impose a conditional expectation constraint, which guarantees that the conditional expected losses under the new pricing measures equal those under the reference model $\P$.  This is done to avoid cross-subsidies, and to ensure that losses are, on average, covered. Thus, our proposed discrimination-free pricing framework is suitable for prices calculated under the real world probability, e.g., for credit and loan modelling and insurance premia calculations.  

We formulate a distributional optimisation problem where we find the probability measures under which zero sensitivity of the pricing principle with respect to each of the protected covariates is  enforced,  the conditional expectation of the losses is  maintained, and which is as near as possible, in Kullback-Leibler (KL) divergence, to the reference measure $\P$. The KL divergence is chosen as it maintains the support of the reference measure, a desirable property as many protected characteristics are categorical, for example, gender and ethnicity.  We solve this distributional optimisation problem, whose solution we term the \textit{discrimination-insensitive pricing measure}, provide a representation of its change-of-measure and prove, under mild conditions, its existence and uniqueness.

A second key contribution of our work is how we treat the scenarios in which the discrimination-insensitive pricing measure is too restrictive, infeasible, or otherwise undesirable.  This may arise if too many characteristics are protected, if the discrimination-insensitive pricing measure is too far from the reference measure, or if it does not exist. To do this, we modify the original optimisation problem into a two-step pricing procedure. First, for each discriminatory covariate separately, we derive a \textit{marginally-insensitive measure}, which is the nearest measure in KL divergence to $\P$ that is insensitive to that protected covariate. Then, in the second step,  the marginally-insensitive measures are merged via a constrained barycentre method, where in the second step we impose the conditional expectation constraint. We prove a succinct representation of the constrained barycentre pricing measure and prove existence and uniqueness. We also solve the pure KL barycentre problem in a general probability measure setting, i.e., without the expectation constraint, which may be of independent interest. 

\subsection{Relation to existing literature}
\label{subsec: lit}

Our methodology is distinct from the literature in that we consider simultaneously the impact of multiple discriminatory covariates, the insensitivity together with the conditional expectation constraint, and that the pricing principles belong to the family of distortion risk principles rather of the pure premium typically studied in the literature. Moreover, the use of barycentre models to overcome difficulties with multiple protected covariates is novel. Furthermore, our pricing measures are tractable, unique, and interpretable.

Fairness is a growing topic of study, with advances in the contexts of machine learning and algorithmic fairness \citep{OnetoChiappa2020, KallusMaoZhou2022},  determining creditworthiness, such as for credit scoring and making lending decisions, \citep{AndreevaAnsellCrook2004, AndreevaMatuszyk2019, Hurlin2026}, in pension design and pricing \citep{AlonsoBoadoDevolder2018}, access to financial services \citep{FuLiu2023, ScottEtAl2024}, and in pricing insurance and underwriting.  Many approaches such as equalized-odds, avoiding proxy discrimination, and excluding the protected characteristics have been proposed.  In the insurance context, the fairness of a policy is pertinent to actuarial researchers and practitioners alike, as governments and governing bodies have implemented regulations and laws to enforce various fairness criterion, e.g. prohibiting discrimination from the results of genetic testing in Canada \citep{Canada_Genetic_NonDiscrimination_Act_2017}, gender in the European Union \citep{EU_Directive_2004_113_EC}, and we refer to  \cite{XinHuang2024} for a thorough treatment of the current regulatory landscape. Further calls have been made by various actuarial societies for methods to mitigate and analyse the discrimination in current practices \citep{CanadianInstituteofActuaries2023, SocietyofActuaries2022, CasualtyActuarialSociety2025}.  Many responses have been made in the literature in the context of fair insurance pricing.  A seminal work, \cite{LindholmEtAl2022discrimfree}, was the first to  mathematically define direct and indirect discrimination, a main notion of discrimination in the literature. Recent advances include \cite{CoteCoteCharpentier2024}, who address indirect discrimination from a casual perspective, \cite{Machado_Charpentier_Gallic_2025} consider sequential transport to assess counterfactual fairness, and \cite{LindholmEtAl2024sensitivity}, where authors use variance-based sensitivities to measure proxy discrimination. \cite{FreesHuang2023} consider the appropriateness of actuarial discrimination from an economic and social justice lens, and \cite{Charpentier2024} provides a comprehensive overview of discrimination and bias in insurance, for an actuarial audience.  This body of work is distinct from works relating to discrimination in machine learning communities, where a typical task is classification as opposed to making a decision based on risk.  With many references available in this line of literature, we refer the reader to a survey \cite{MehrabiEtAl2021} and references therein.  Moreover, while the use of sensitivity analysis is well established in risk management and finance, see \cite{BorgonovoPlischke2016} for an overview, using sensitivity to quantify fairness is relatively new with the exception of \cite{HuangPesenti2025}. The authors \cite{HuangPesenti2025} consider the problem of modifying the pricing rule to mitigate sensitivity to protected covariates. Our approach is distinct in that we derive a discrimination-free pricing measures while not changing the pricing principal, have a conditional expectation constraint, and propose a barycentre approach for situations with multiple protected covariates; the latter is a novel approach in the fairness literature.

\subsection{Structure of the article}
\Cref{sec: preliminaries} defines the discrimination-insensitive pricing measure and the discrimination-insensitive pricing principle.  In \Cref{sec: disc-ins-pricing-measure}, we state our main results, prove the existence and uniqueness of the discrimination-insensitive pricing measure, and demonstrate its behaviour in an example.  \Cref{sec: barycentre} is devoted to the constrained barycentre, that is we first derive marginal insensitive measures and second reconcile them using a constrained barycentre approach. We further provide the representation of the KL barycentre of probability measures, and prove its existence and uniqueness, which may be of independent interest.  Finally, in \Cref{sec: application} we preform a case study to illustrate the behaviour of the risk principle under the pricing measures introduced in this work.

\section{Quantifying discrimination via sensitivity}
\label{sec: preliminaries}
\vspace{1em}

\subsection{Prices and their sensitivity}
Let $(\Omega, \F, \P)$ be a probability space where $\P$ is a probability measure, and let $\Lp := \Lp(\Omega, \F, \P)$ be the space of square integrable random variables (rvs) on $(\Omega, \F, \P)$. For $Y \in \Lp$, we denote by $\supp{Y}$ the support of $Y$ and write $Y \sim F_Y$, where $F_Y$ is the cumulative distribution function (cdf) of $Y$ under $\P$, i.e., $F_Y(y):= \P(Y\le y)$, and further let $\Finv_Y(q) := \inf \{y \in \R : q \leq F_Y(y) \}$, $q \in (0,1)$, be the (left-)quantile function of $Y$. The cumulant generating function (cgf) of a rv $Y$ is defined as $K_Y(t) := \log \E[\e^{tY}]$, while for a random vector $\Z$, we define its cgf as $\mathbf{K}_{\Z}(\t) = \log(\E[\e^{\t^\intercal \Z}])$. Throughout, equalities and inequalities of rvs are meant to be component-wise and in a $\P$-a.s. sense. Moreover, we interpret the probability measure $\P$ as the real-world probability measure, i.e. the $\P$-distribution of a loss $Y$ corresponds to the loss observed from data.

Personal characteristics of policyholders are collected for the purposes of calculating their premia. However, in some jurisdictions, certain characteristics might be protected from being used to estimate premia \citep{Billingsley2022, XinHuang2024}. These prohibitions are typically determined by local regulation, legislation, or best practices. We denote the protected characteristics of the insured by $\D := (D_1, \ldots, D_m)$, where $D_i \in \Lp$ for all $i \in \M := \{1, \ldots, m\}$, and call $\D$ the vector of discriminatory covariates. The other covariates, that is the permitted characteristics, $\X:= (X_1, \ldots, X_n)$, where $X_i \in \Lp$ for all $i \in \mN := \{1, \ldots, n\}$, are called the non-discriminatory covariates. We assume that the split into discriminatory and non-discriminatory covariates is exogenously given.

We denote the policy holder claim, or aggregate loss, by $Y := h(\X,\D) + \mfu$, where $h:\R^{n+m} \to \R$ is assumed to be differentiable and determined by the insurer, and $\mfu$ is a random error term.
The aggregation function $h$ can be estimated by practitioners when modelling losses using both permitted and discriminatory covariates. Examples include regression models or more sophisticated machine learning techniques. Throughout, we assume that $h$ is non-constant.

An insurer is provided with policyholder's information $(\x, \d) $ -- realisations of $(\X, \D)$ -- to calculate their premium, however, the insurer can only use the non-discriminatory covariates for pricing. Thus, the premium is calculated via the conditional rv $Y$ given $\X = \x$. Thus, for fixed $\x$, we denote by $Y|_\x$ the rv $Y$ given $\X=\x$. Here, we focus on pricing regimes that belong to the class of distortion risk principle, defined below.  Distortion risk principles are popular due to their attractive properties such as law-invariance, monotonicity, comonotone additivity, see e.g.,  \cite{WangYoungPanjer1997}. Distortion risk principles include the well-known Value-at-Risk (VaR), Expected Shortfall (ES), as well as the Wang transform, and the expected value. 

\begin{definition}[Distortion risk principle]
The distortion risk principle for covariates $\x$ with losses $Y|_\x \in \Lp$ is 
\[
\rho\big(Y|_\x\big):= \int_0^1 \Finv_{Y|_\x}(u) \;\gamma(u) du = \E[Y\; \gamma(U_{Y|_\x}) ~|~ \X = \x]\,,
\]
where $\gamma: [0,1] \to [0,+\infty)$ satisfies $\int_0^1 \gamma(u) du = 1$ and is called a distortion weight function, and where $U_{Y|\x}:= F_{Y|\x}(Y|_\x)$ is, conditional on $\X = \x$, a uniform random variable under $\P$.
\end{definition}
For ease of notation, we write $\E_\x[\;\cdot\;]:= \E[\;\cdot ~|~\X=\x]$ to denote the expectation conditional on the permitted covariate $\X = \x$. 

The literature on  \textit{discrimination-free} or \textit{fairness} in insurance is extensive and different notations of fairness and discrimination-free have been proposed. For example, \cite{FreesHuang2023} discuss that fairness depends on the pooling type of the insured group and the provider. In \cite{LindholmEtAl2022discrimfree}, the authors' interpretation of discrimination-free is that a premium should, in a probabilistic sense, avoid both direct and indirect discrimination.  In this work, we take the stance that \textit{discrimination-free} means \textit{insensitivity to the discriminatory covariate}. The sensitivity to the discriminatory covariate quantifies how much the premium changes with respect to perturbations to that discriminatory covariate. Mathematically, we define the \textit{sensitivity} of the distortion risk principle to a discriminatory covariate via the G{\^a}teaux derivative.

\begin{definition}[Sensitivity of the distortion risk principle]\label{def:sensitivity}
Let $\partial_{D_i} \rho(Y|_\x)$ denote the G{\^a}teaux derivative of the distortion risk principle $\rho(Y|_\x)$ with respect to the $i$-th discriminatory covariate $D_i$ in direction $D_i$, that is
\[
\partial_{D_i} \rho\big(Y|_\x\big):= \lim_{\ep \to 0} \frac{ \rho\big(h(\X, \D_{i,\ep}) + \mfu ~\big|~\X = \x\big) - \rho\big(h(\X, \D) + \mfu ~\big|~\X = \x\big)}{\ep}\;,
\]
where $\D_{i,\ep}:=(D_1, \ldots, D_{i-1}, D_{i,\ep}, D_{i+1}, \ldots, D_d)$, and $D_{i,\ep}$ is a rv comonotonic to $D_i$ with cdf $F_{D_{i,\ep}}(\cdot):= F_{D_i}\big(\frac{\cdot}{1+\ep}\big)$.
\end{definition}

If there is only one discriminatory covariate or the index is clear from context, we simply write $\partial \rho(Y|_\x)$. Note that the perturbation of $D_i$ corresponds to a proportional modification $D_{i,\ep} = D_i(1 +\ep)$. While alternative perturbations may be considered (see e.g., \cite{PesentiMillossovichTsanakas2025} for sensitivities to different perturbations), we restrict here to those of proportional type. As we consider the sensitivity of distortion risk principles to a perturbation of a protected covariate, the perturbation should modify the entire distribution of $D_i$ and be non-parametric, which is achieved by the proportional perturbation. The proportional perturbation $D_{i,\ep}$ is conceptually intuitive when the discriminatory covariate has support equal to the real line. We  discuss the case when $D_i$ is discrete in the remark below. 

\begin{remark}\label{remark:discrete}
    If $D_i$ has support that is not equal to the real line such as a discrete covariate, then the perturbation $D_{i,\ep}$ yields values outside the support of $D_i$.  In this scenario, we mollify $D_i$. To illustrate, let $D_i$ take values $t_1, \ldots, t_K$ with probabilities $p_1, \ldots, p_K$, $\sum_{k = 1}^K p_k = 1$, and for simplicity assume that $(\D_{-i}, \X)$ has support $\R^{n+m-1}$, where $\D_{-i}$ denotes the vector $\D$ deprived of its $i$-th component. First, we extend the aggregation function $h\colon \R^{n+m-1}\times \{t_1, \ldots, t_K\}$ to $\bar{h}\colon \R^{n+m}\to \R$ via
\begin{equation*}
    \bar{h}(\x, \d_{-i}, t):= 
    \begin{cases}
            h(\x, \d, t_1) \qquad &\text{if} \quad t \le t_1
            \\[0.5em]
            h(\x, \d, t_2) \qquad &\text{if} \quad t_1 < t \le t_2
            \\
            \qquad \vdots
            \\
            h(\x, \d, t_K) \qquad &\text{if} \quad t_{K-1}<t \,,
    \end{cases}
\end{equation*}
in which case $\bar{h}(\X, \D) = h(\X, \D)$ $\P$-a.s.. Second we mollify $D_i$. Often, the values $t_1, \ldots, t_K$ assigned to the categories of $D_i$ are chosen arbitrarily. Thus, instead of discrete values, we may define a rv $\hat{D}_i$ that is comonotonic to $D_i$ and has density
\begin{equation}\label{eq:pert-discrete-density}
    f_{\hat{D}_i}(x):= \frac{1}{\tau}\sum_{k = 1}^K\phi\left(\frac{x- t_k}{\tau}\right)\,,
\end{equation}
where $\phi$ denotes the standard normal density and  $\tau >0$. Thus, $\hat{D}_i$ is a rv generated by a kernel density estimation of $D_i$ with bandwidth $\tau$, where for simplicity of notation we suppress the dependence of $f_{\hat{D}_i}$ and $\hat{D}_i$ on $\tau$. Clearly, if we take the bandwidth $\tau$ to zero the rv $\hat{D}_i$ converges to $D_i$, that is $\lim_{\tau \downarrow 0} \hat{D}_i = D_i$ $\P$-a.s..

Thus, as $\bar{h}$ is defined on $\R^{n+m}$ the proportional perturbation applied to $\hat{D}_i$ and evaluated at $\bar{h}$, i.e. $\bar{h}(\X, \hat{D}_{i,\ep})$, is well-defined. Moreover, as long as the bandwidth $\tau$ is chosen small enough, $\P\big(\bar{h}(\X, \D) = \bar{h}(\X, \D_{-i}, \hat{D}_i)\big)$ is close to 1 and therefore, $\bar{h}(\X, \D_{-i}, \hat{D}_i)$ is a good approximation for $h(\X, \D)$. 
Hence, whenever $D_i$ is categorical, we propose to approximate $D_i$ with a continuously distributed rv $\hat{D}_i$ whose clusters correspond to the categories of $D_i$. For example a value of $\hat{D}_i$ that is close to $t_k$, corresponds to $D$ taking the value $t_k$. Thus, the proportional perturbation $\hat{D}_{i,\ep} = \hat{D}_i (1 + \ep)$ is well-defined and the distribution of $\hat{D}$ is modified in a non-trivial manner.
\vspace{1em}
\end{remark}

Next, we derive the representation of the sensitivity of the distortion risk principle which will be important for the exposition. The results hold under mild assumptions on the aggregate loss $Y$, and we further show below that the resulting formula for the sensitivity is a good approximation for the sensitivity when perturbing a discrete covariate. 

In order to guarantee the existence of the sensitivity of the distortion risk principle, we require the below assumption. For fixed $\ep > 0$, we denote by $F_{i, \ep}(\cdot):= \P(h(\X, \D_{i,\ep}) + \mfu \le \cdot ~|~ \X = \x)$ the conditional cdf of the distorted output and by $\Finv_{i, \ep}(\cdot)$ its quantile function.

\begin{assumption}
\label{asm: cont}
    Assume that the function $\ep \mapsto F_{i,\ep}(y)$ is continuously differentiable for all $y$ in the interior of $\supp{Y_\ep}$, and that $\ep \mapsto \Finv_{i,\ep}(u)$ is differentiable for all $u \in (0,1)$.
\end{assumption}

The next result, which is a consequence of \cite{PesentiMillossovichTsanakas2025}, provides a representation of the sensitivity of the distortion risk principle to a protected covariate.

\begin{lemma}\label{lemma:rep-sensitivity}

Let $F_\ep$ be such that Assumption \ref{asm: cont} is satisfied. Then, the sensitivity of $\rho(Y|_\x)$ to covariate $D_i$ has representation
\begin{align*}
    \partial_{D_i} \rho(Y|_\x) &= 
    \Ex\left[\Phi_i(\X, \D, U_{Y|_\x})\right]\,,
\end{align*}
where $\Phi_i(\x, \d, u) := d_i \;\partial_i h(\x, \d) \gamma(u)$.
\end{lemma}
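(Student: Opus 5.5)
We follow the standard route for differentiating a distortion risk measure along a stochastic perturbation, as in \cite{PesentiMillossovichTsanakas2025}. The plan has three steps: write the perturbed principle in quantile form, differentiate the quantile function of the perturbed loss with respect to $\ep$, and then change variables back to an expectation over $(\X,\D,U_{Y|_\x})$.

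Write $Y_\ep := h(\X,\D_{i,\ep})+\mfu$. Because $D_{i,\ep}$ is comonotonic to $D_i$ with cdf $F_{D_i}(\cdot/(1+\ep))$, we may take $D_{i,\ep}=(1+\ep)D_i$ $\P$-a.s. Conditionally on $\X=\x$,
\[
\rho(Y_\ep|_\x)=\int_0^1 \Finv_{i,\ep}(u)\,\gamma(u)\,du .
\]
The difference quotient in Definition~\ref{def:sensitivity} is therefore $\int_0^1 \frac{\Finv_{i,\ep}(u)-\Finv_{i,0}(u)}{\ep}\gamma(u)\,du$. By Assumption~\ref{asm: cont} the integrand converges pointwise to $\partial_\ep \Finv_{i,\ep}(u)|_{\ep=0}$. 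Passing the limit inside the integral requires a domination argument. We would obtain one from a mean-value bound on $h(\x,\d_{i,\ep})-h(\x,\d)=\ep\, d_i\,\partial_i h(\x,\d_{i,\tilde\ep})$ together with square integrability ($\D\in\Lp$, and $\gamma$ paired with $\Lp$ quantiles). If needed, we would add a mild growth condition on $\partial_i h$ as part of the standing regularity.

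The key step is computing $\partial_\ep \Finv_{i,\ep}(u)$ at $\ep=0$. Differentiate the identity $F_{i,\ep}(\Finv_{i,\ep}(u))=u$ implicitly in $\ep$, using continuous differentiability of $\ep\mapsto F_{i,\ep}$ and differentiability of the quantile. This gives
\[
\partial_\ep \Finv_{i,\ep}(u)\big|_{\ep=0} = -\frac{\partial_\ep F_{i,\ep}(y)|_{\ep=0}}{f_{Y|_\x}(y)}\Big|_{y=\Finv_{Y|_\x}(u)} .
\]
We then compute $\partial_\ep F_{i,\ep}(y)=\partial_\ep \P_\x(Y+\ep\,\Delta_\ep\le y)$ with $\Delta:=D_i\partial_i h(\X,\D)$ as the first-order increment, which yields $-f_{Y|_\x}(y)\,\Ex[\Delta\mid Y=y]$. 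The densities cancel, leaving
\[
\partial_\ep\Finv_{i,\ep}(u)|_{\ep=0}=\Ex\!\left[D_i\,\partial_i h(\X,\D)\;\middle|\; U_{Y|_\x}=u\right].
\]
We expect this to be the main obstacle. Justifying the derivative of the cdf as a conditional expectation requires a conditional density of $Y|_\x$. We would either invoke the corresponding lemma of \cite{PesentiMillossovichTsanakas2025} directly, or argue through the smooth error term $\mfu$ and Assumption~\ref{asm: cont}.

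Finally, substitute and use the tower property:
\[
\partial_{D_i}\rho(Y|_\x)=\int_0^1 \Ex[D_i\partial_i h(\X,\D)\mid U_{Y|_\x}=u]\gamma(u)\,du = \Ex\big[D_i\,\partial_i h(\X,\D)\,\gamma(U_{Y|_\x})\big].
\]
This holds because $U_{Y|_\x}$ is uniform under $\P$ given $\X=\x$, and the right-hand side equals $\Ex[\Phi_i(\X,\D,U_{Y|_\x})]$ as claimed.
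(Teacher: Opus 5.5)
Your proposal is correct and follows essentially the paper's route. The paper does not write out a proof; it obtains the lemma as a consequence of \cite{PesentiMillossovichTsanakas2025}, whose argument runs along your chain: quantile representation, implicit differentiation of $F_{i,\ep}(\Finv_{i,\ep}(u))=u$, the identity $\partial_\ep F_{i,\ep}(y)|_{\ep=0}=-f_{Y|_\x}(y)\,\Ex[D_i\,\partial_i h(\X,\D)\mid Y=y]$, and integration against $\gamma$ (the jump term of that reference vanishes because $h$ is differentiable). The regularity you flag is left implicit in the paper as well: a positive conditional density of $Y|_\x$, and the limit interchange, which is handled more cleanly by uniform integrability via $\int_0^1|\Finv_{i,\ep}(u)-\Finv_{i,0}(u)|^2du\le\Ex[(Y_\ep-Y)^2]$ than by a pointwise dominating function.
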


Note that even if some of the covariates $(\X,\D)$ are discrete, the assumptions in \Cref{lemma:rep-sensitivity} can still be fulfilled as they pertain to continuity of the perturbed aggregate loss $Y$.

\begin{remark}
    We continue \Cref{remark:discrete}, the case when $D_i$ is discrete and we perturb the continuous rv $\hat{D}_i$ with density in \eqref{eq:pert-discrete-density} instead of $D_i$. In this case $\partial_{\hat{D}_i}\rho\big(\bar{h}(\X, \D_{-i}, \hat{D}_i)|_\x\big)$ is close (for small enough bandwidth) to $\partial_{D_i}\rho(Y_\x)$. Indeed, under mild integrability assumptions, we obtain (using \Cref{lemma:rep-sensitivity}) that
\begin{align*}
   \lim_{\tau \downarrow 0} \partial_{\hat{D}_i}\rho(Y|_\x)
    &=
    \lim_{\tau \downarrow 0} \; \Ex\big[ \hat{D}_i \;\partial_i \bar{h}(\X, \D_{-i}, \hat{D}_i) \gamma\big(F_{Y|_\x}(Y)\big)\big]
    \\
    &= 
    \Ex\left[\lim_{\tau \downarrow 0}  \; \hat{D}_i \;\partial_i \bar{h}(\X, \D_{-i}, \hat{D}_i) \gamma\big(F_{Y|_\x}(Y)\big)\right]
    \\
    &=
    \Ex\big[ D_i \;\partial_i h(\X, \D_{-i}, D_i) \gamma\big(F_{Y|_\x}(Y)\big)\big]
    \\
    &= 
    \partial_{D_i}\rho(Y|_\x)\,.
\end{align*}
Thus, even if $D_i$ is discrete, the formula in \Cref{lemma:rep-sensitivity} provides a good approximation to the sensitivity when perturbing $\hat{D}_i$, the mollified version of $D_i$. Thus, throughout we will utilise \Cref{lemma:rep-sensitivity} independent of the support of $D_i$.
\vspace{1em}
\end{remark}

\subsection{Discrimination-insensitive principle}

In this section we define a discrimination-insensitive probability measure such that the distortion risk principle under that measure has zero sensitivity, i.e. is insensitive, to protected covariates. That is, the pricing principle becomes insensitive to distributional perturbation of the protected covariate. We note that being insensitive is both a property of the distortion risk principle $\rho$ and the underlying data $(\X, \D)$ as the next example shows.

\begin{example}
   Consider a single permitted characteristic $X$, a single discriminatory covariate $D$, a linear aggregation function $h(x, d) = \beta_1 x + \beta_2 x d + \mfu$, $\mfu$ is an error term with mean zero, and the expected value, also called pure premium in the actuarial literature. Then, the sensitivity with respect to $D$ is 
\[
\partial_D \Ex[Y] = \Ex[\Phi(X, D, U_{Y|_\x})] = \beta_2\, x \;\Ex[D]\;,
\]
which is not zero unless $\Ex[D]$ or $x$ is zero. 
\end{example}

Thus, with the aim to modify premium principles to become insensitive to protected covariates, we propose to construct a probability measure $\Q$ that is ``closest" to the real world probability measure $\P$, and which has zero sensitivity to protected covariates.

In order to quantify the difference between probability measures, we employ the popular Kullback-Leibler (KL) divergence. For a probability measure $\Q$ on $(\Omega, \F)$, the KL divergence from $\Q $ to $\P$ is defined as
\begin{equation*}
    D_{KL}(\Q ~||~ \P) := \E\left[\frac{d\Q}{d\P} \log\left(\frac{d\Q}{d\P}\right)\right]\,,
\end{equation*}
if $\Q$ is absolutely continuous with respect to $\P$ and $+\infty$ otherwise.
We choose the KL divergence to quantify the deviation between probability measures as the KL divergence preserves the support of the covariates.  In the context of insurance and policyholder-provided information, many covariates are discrete or categorical. Classical examples include gender, age groups, and nationality.

Next, we define the optimal pricing measure as the closest probability measure to $\P$ in the KL divergence, under which the premium has zero sensitivity to all discriminating characteristics and maintains the same expectation conditional on $\x$.

\begin{definition}[Discrimination-insensitive pricing measure]
Let Assumption \ref{asm: cont} hold. Then, for some policyholder covariates $\x \in \supp{\X}$, the discrimination-insensitive measure $\Q_\x^*$ (if it exists) is the solution to
\begin{subequations}\label{eq: discrim_measure}
    \begin{align}
    \argmin_{\Q \ll \P} D_{KL}(\Q ~||~\P)
    \quad \text{subject to}\quad
    & \partial_{D_i} \rho^\Q(Y |_{\x}) = 0 \,,
    \quad \text{for all} \quad i \in \M\,, \quad\text{ and}\\
    & \Ex^\Q [Y] = \Ex^\P[Y].
    \label{eq:discrim_measure-sensitivity-constraint}
\end{align}
\end{subequations}
\end{definition}

We interpret the solution to \Cref{eq: discrim_measure} as the ``closest" probability measure to the real-world measure $\P$ that is insensitive to the discriminatory covariates and has the same expected losses as under $\P$.  The expectation constraint prevents cross-subsidy between subgroups.  With this discrimination-insensitive ``pricing'' measure, a financial company can calculate discrimination-insensitive prices as follows. 

\begin{definition}[Discrimination-insensitive principle]
Assume there exist a solution to Optimisation Problem \eqref{eq: discrim_measure}. Further, let $\x \in \supp{\X}$ be some policyholder covariates and $\Q^\star_\x$ be a discrimination-insensitive measure, i.e. a solution to \eqref{eq: discrim_measure}. Then the discrimination-insensitive pricing principle is given by
\begin{align*}
\rho^{\Q^\star_\x}(Y|_\x) &:= \int_0^1 \Finv_{Y|_\x}^{\Q^\star_\x}(u) \gamma(u) du
= \Ex^{\Q_x^\star}\left[Y \, \gamma(U^{\Q^\star_\x}_{Y|_\x}) \right]\,,
\end{align*}
where $\Finv^{\Q^\star_\x}_{Y|_\x}$ is the quantile function of $Y|_\x$ under $\Q^\star_\x$, and where $U^{\Q^*_\x}_{Y|_{\x}}:= F^{\Q^*_\x}_{Y|\x}(Y|_\x)$ is conditional on $\x$ a uniform rv under $\Q^\star_\x$.
\end{definition}

As we require the conditional expectation constraint, that is, the conditional expectations of the losses under the discrimination-insensitive measure equals those under $\P$, our setting is suitable when prices are calculated under the real world probability $\P$. This includes credit loans and insurance premia. Moreover, in this work, the insensitivity is with respect to each covariate individually.  That is, each $D_i$ is perturbed while $\D_{-i}$ and $\X$ are kept fixed, and as the perturbation $D_{i, \ep}$ is comonotonic to $D_i$, the copula of $(\X, \D_{-i}, D_i)$ and $(\X, \D_{-i}, D_{i,\ep})$ are the same. We leave the examining of indirect discrimination, as formalised in \cite{LindholmEtAl2022discrimfree}, for future research.

\section{Discrimination-insensitive pricing measure}
\label{sec: disc-ins-pricing-measure}

This section is devoted to solving optimisation problem \eqref{eq: discrim_measure} and deriving a representation of the discrimination-insensitive probability measure. Moreover, we provide conditions for existence of a solution to optimisation problem \eqref{eq: discrim_measure} and show that it is unique if it exists. 

Before stating the representation of the discrimination-insensitive probability measure, we first discuss the sensitivity of the distortion risk measure under an alternative probability measure $\Q \ll \P$.
By \Cref{lemma:rep-sensitivity} the sensitivity of the distortion risk principle under $\Q$ has representation
\begin{equation*}
    \partial_{D_i} \rho^\Q (Y|_\x) 
    =
    \Ex^\Q\left[ \Phi_i(\X, \D, U^\Q_{Y|_{\x}})\right] \,,
\end{equation*}
where $U^\Q_{Y|_{\x}}$ is under $\Q$ a uniform rv that is comonotonic to $Y|_\x$. Next, we rewrite the sensitivity constraint in \eqref{eq: discrim_measure} as 
\begin{equation}\label{eq:sens-q-unit}
    \partial_{D_i} \rho^\Q (Y|_\x) 
    =
    \Ex^\Q\left[ \Phi_i(\X, \D, V_{Y|_{\x}})\right] = 0
    \,, \quad 
    \text{subject to} \quad \Q(V_{Y|_{\x}} \le u) = u\,
\quad \text{for all } u \in (0,1)\,,
\end{equation}
where $V_{Y|_{\x}}$ is under $\P$ a uniform rv that is comonotonic to $Y|_\x$. The infinite number of constraints in \eqref{eq:sens-q-unit} enforces that $V_{Y|_{\x}}$ is also a uniform rv under $\Q$. Moreover, whenever $\Q$ and $\P$ are equivalent, we have that $V_{Y|_{\x}}$ is comonotonic to $Y|_\x$ under $\Q$. While we do not impose equivalence of the alternative probability measures in optimisation problem \eqref{eq: discrim_measure}, the solution to \eqref{eq: discrim_measure} will be equivalent to $\P$. Note that rewriting the sensitivity as in \eqref{eq:sens-q-unit}, both the sensitivity an the probability constraints  become linear constraints with respect to the RN derivative $\frac{d \Q}{d \P}$.

\begin{theorem}[Representation]
\label{thm:representation}

Let Assumption \ref{asm: cont} be satisfied. Denote a solution to optimisation problem \eqref{eq: discrim_measure} if it exists by $\Q_\x^{*}$. Then $\Q_\x^{*}$ is unique and its Radon-Nikodym (RN) derivative to $\P$ admits representation
\begin{align*}
    \frac{d\Q^{*}_\x}{d\P} &= \frac{1}{C_{\x}}\e^{-\bfeta_\x^{*\intercal} \Phi (\X, \D, U_{Y|_{\x}}) - \eta^*_{m+1} Y}  \;,
    \quad \text{where}
    \\[0.75em]
    C_{\x}&:=\E_{\x}\left[e^{-\bfeta_\x^{*\intercal} \Phi (\X, \D, U_{Y|\x}) - \eta^*_{m+1} Y   }~\big|~U_{Y|\x}\right]\,,
\end{align*}
where $\Phi(\x, \d, u) := (\Phi_1(\x, \d, u), \ldots, \Phi_m(\x, \d, u))$, $\bfeta^*_\x := (\eta^*_{\x,1}, \ldots, \eta^*_{\x,m})$ with $\eta^*_{\x,i} \in \R$ are the optimal Lagrange multipliers such that $\partial_{D_i} \rho^{\Q^*_\x}(Y |_{\x}) = 0$ is satified for all $i \in \M$, and $\eta^*_{x, m+1} \in \R$ is the optimal Lagrange multiplier such that $\Ex^\Q[Y] = \Ex^\P[Y]$ holds.
\end{theorem}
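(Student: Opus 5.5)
We reformulate \eqref{eq: discrim_measure} as a convex problem in the RN derivative and solve it by Lagrangian duality (entropy minimisation under linear constraints). Throughout we work conditionally on $\X=\x$ and write $Z := \frac{d\Q}{d\P}$, restricted to the conditional law given $\X = \x$. By \eqref{eq:sens-q-unit}, replacing $U^\Q_{Y|_\x}$ by the fixed $\P$-uniform $V_{Y|_\x} = U_{Y|_\x}$ turns all constraints into linear ones in $Z$:
\begin{align*}
\E_\x[Z\,\Phi_i(\X,\D,U_{Y|_\x})] &= 0, \quad i\in\M,\\
\E_\x[Z\,Y] &= \E_\x[Y],\\
\E_\x[Z\,\Id_{\{U_{Y|_\x}\le u\}}] &= u, \quad u\in(0,1).
\end{align*}
The objective $\E_\x[Z\log Z]$ is strictly convex on the convex set of nonnegative densities, and the feasible set is convex. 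Uniqueness therefore follows immediately: if two minimisers existed, their average would be feasible with strictly smaller KL divergence. This step needs no duality.

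For the representation, the continuum of uniformity constraints is equivalent to requiring $\E_\x[Z \mid U_{Y|_\x}] = 1$. Indeed, the law of $U_{Y|_\x}$ under $\Q$ must equal its $\P$-law, so $Z$ has conditional mean one given $U_{Y|_\x}$. We attach a Lagrange function $\lambda(U_{Y|_\x})$ to this conditional constraint, multipliers $\eta_i$ to the sensitivity constraints, and $\eta_{m+1}$ to the expectation constraint. This gives the Lagrangian
\[
\E_\x\!\left[Z\log Z + Z\big(\bfeta^\intercal\Phi + \eta_{m+1}Y + \lambda(U_{Y|_\x})\big)\right] - \eta_{m+1}\E_\x[Y] - \E_\x[\lambda(U_{Y|_\x})].
\]
Minimising pointwise in $Z$ gives $\log Z + 1 + \bfeta^\intercal\Phi + \eta_{m+1}Y + \lambda(U) = 0$, hence
\[
Z = e^{-1-\lambda(U)}\,e^{-\bfeta^\intercal\Phi - \eta_{m+1}Y}.
\]
The constraint $\E_\x[Z\mid U]=1$ then fixes $e^{1+\lambda(U)} = C_\x$, the conditional normaliser in the statement. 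This yields the claimed form, with $\bfeta^*_\x$ and $\eta^*_{m+1}$ chosen so that the remaining $m+1$ moment constraints hold. Since $Z>0$, $\Q^*_\x$ is equivalent to $\P$, so $U_{Y|_\x}$ remains comonotone with $Y|_\x$ under $\Q^*_\x$. The rewriting \eqref{eq:sens-q-unit} is therefore consistent, and Lemma~\ref{lemma:rep-sensitivity} gives $\partial_{D_i}\rho^{\Q^*_\x}(Y|_\x)=0$.

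To make this rigorous rather than formal, we verify optimality directly. Let $Z^*$ be the candidate density and $Z$ any feasible density with finite KL divergence. The inequality $z\log z \ge z^*\log z^* + (1+\log z^*)(z-z^*)$ gives
\[
\E_\x[Z\log Z] - \E_\x[Z^*\log Z^*] \ge \E_\x\!\left[(Z-Z^*)\big(-\bfeta^{*\intercal}\Phi - \eta^*_{m+1}Y - \log C_\x\big)\right].
\]
The $\Phi$ and $Y$ terms vanish because both $Z$ and $Z^*$ satisfy the same linear constraints. The $\log C_\x$ term vanishes because $C_\x$ is $U$-measurable and $\E_\x[Z-Z^*\mid U]=0$. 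Hence $Z^*$ is optimal; this also shows that any minimiser must be of this form, given that a solution exists, as the theorem assumes.

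\textbf{Main obstacle.} The delicate point is justifying these manipulations integrability-wise: that $\E_\x[(Z-Z^*)\log Z^*]$ is well defined and can be split term by term. This requires $\Phi$ and $Y$ to be integrable under both $Z$ and $Z^*$, and $\log C_\x$ to be integrable. We would handle this by assuming, as part of ``a solution exists'', finite conditional moment generating functions of $(\Phi, Y)$ near $(\bfeta^*_\x, \eta^*_{m+1})$. If needed, we would truncate $Z$ and pass to the limit via monotone convergence on the positive and negative parts.
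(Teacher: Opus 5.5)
Your argument follows the paper's proof. You use a Lagrangian with a $U_{Y|_\x}$-dependent multiplier for the uniformity constraint; your condition $\E_\x[Z\mid U_{Y|_\x}]=1$ is exactly the paper's $\mathfrak{g}(u)=1$. From there you minimise pointwise, normalise conditionally by $C_\x$, and get uniqueness from strict convexity over a convex feasible set. Your tangent-line verification is not in the paper and is correct: \emph{if} real multipliers make the exponential candidate $Z^*$ feasible, then $Z^*$ is the unique minimiser.

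\textbf{The gap.} The problem is the sentence ``this also shows that any minimiser must be of this form, given that a solution exists.'' Verification plus uniqueness identifies the minimiser only once a feasible exponential candidate is known to exist. The theorem asserts the converse: that existence of a minimiser forces real multipliers to exist. That converse is false without a further condition. Take $m=1$, $h(x,d)=\beta_x x+\beta d$ with $\beta>0$, Gaussian $\mfu$, $\gamma>0$, and let $D$ given $\X=\x$ equal $0$ with probability $\tfrac12$ and be $\mathrm{Exp}(1)$ otherwise. Then $\Phi_1=\beta D\,\gamma(U_{Y|_\x})\ge 0$, so every feasible $Z$ vanishes on $\{D>0\}$. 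By conditional Jensen, the unique minimiser is $Z^\circ=\Id_{\{D=0\}}/\P(D=0\mid \X=\x,U_{Y|_\x})$. Its KL divergence is finite, because $-\log\P(D=0\mid\X=\x,U_{Y|_\x})=O(1+Y^2)$. Yet $Z^\circ$ is not equivalent to $\P$ and is not of the form $e^{-\eta\Phi_1-\eta' Y}/C_\x$ for any $\eta,\eta'\in\R$; it is only the limit as $\eta\to+\infty$.

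The paper's proof makes the same leap, asserting that existence of a solution ``is equivalent to the existence of optimal Lagrange multipliers.'' So your proposal is no weaker than the source, but a complete argument still needs the missing direction. There are two ways to supply it.
\begin{enumerate}[label=$(\roman*)$]
\item Assume, as Theorem~\ref{thm:existence} does, that \eqref{eq:sol-eta} has a real root; your verification and uniqueness then finish the proof.
\item Prove the root exists by showing that the dual function $\bfeta\mapsto\int_0^1\log\E_{\x,u}\big[e^{-\bfeta^\intercal\Phi}\big]\,du$ is coercive. This holds, for instance, when the conditional cgfs are finite and $0$ is an interior point of the set of values $\E_\x[Z\,\Phi]$ attainable by densities $Z$ with $\E_\x[Z\mid U_{Y|_\x}]=1$. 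The example above violates exactly this condition.
\end{enumerate}

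A side remark on $\eta_{m+1}$. Since $U_{Y|_\x}=F_{Y|_\x}(Y)$ with $F_{Y|_\x}$ continuous, $Y|_\x=\Finv_{Y|_\x}(U_{Y|_\x})$ a.s. So $e^{-\eta_{m+1}Y}$ is $U_{Y|_\x}$-measurable and cancels against $C_\x$. The expectation constraint is therefore already implied by $\E_\x[Z\mid U_{Y|_\x}]=1$: only $m$ constraints bind, and $\eta^*_{m+1}$ is arbitrary rather than determined. This does not break your argument. It also settles your integrability concern for $Y$, since $Y|_\x$ has the same law under every feasible $\Q$ as under $\P$.
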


\textit{Proof} 
We let $\x \in \supp{\X}$ be fixed. For simplicity of notation, we assume that the joint density of $(\D, U_{Y|_{\x}})$ conditional on $\X$ exists. The arguments in this proof can be generalised to $(\D, U_{Y|_{\x}})$ having arbitrary distributions. Let $\Q \ll \P$ (possibly depending on $\x$) be a probability measure on $(\Omega, \F)$ and denote by $f(\cdot, \cdot)$ and $g(\cdot, \cdot)$ the density of $(\D, U_{Y|_{\x}})|_{\x}$, under $\P$ and $\Q$, respectively. For simplicity, where we omit the dependence on $\x$, and further denote the support of $(\D, U_{Y|_{\x}})|_\x$ by $\A\subseteq \R^{m+1}$.

Next, we rewrite the sensitivity constraints of \eqref{eq:sens-q-unit} in terms of the $\Q$ density $g$ and obtain
\begin{equation*}
    \partial_{D_i} \rho^\Q (Y|_\x) 
    =
    \int_{\A} g(\d, u) \Phi_i(\x, \d, u) \, d(\d, u)\,,
    \quad \text{subject to} \quad \mathfrak{g}(u):= \int_{\supp{D}}g(\d, u) d \d = 1 \,
\end{equation*}
for all $u \in (0,1)$. At this stage, we do not make the assumption that the probability measure $\Q$ is equivalent to $\P$. From the representation of the optimal RN derivative, we obtain that the optimal probability measure is equivalent to $\P$.

Rewriting Optimisation Problem \eqref{eq: discrim_measure} as an optimisation problem over densities, we obtain
\begin{align*}
\label{eq: discrim_opt_integral}
    \min_{g: \R^{m+1} \to\R^{m+1}} \int_{\A} g(\d, u) \log\left(\frac{g(\d, u)}{f(\d, u)}\right) d(\d, u)
    \quad &\text{subject to}\\[0.5em]
    &\int_{\A} g(\d, u) \Phi_i(\x, \d, u) \, d(\d, u) = 0 \quad \text{for all} \quad i \in \M \,, 
    \\[0.5em]
    & \int_\A h(\x, \d)\; g(\d, u) d(\d,u) = \int_\A h(\x, \d)\; f(\d, u) d(\d,u)\;,
    \\[0.5em]
    &\int_{\A} g(\d, u) d(\d, u) = 1\;, \\[0.5em]
    &g( \d, u) \geq 0 \quad \text{for all }\quad ( \d, u) \in \A \;,
    \\[0.5em]
    & \mathfrak{g}(u)  = 1 \quad \text{for all } u \in (0,1) \;.
\end{align*}
The first constraint corresponds to the insensitivity constraint and the second to the conditional expectation constraint, see \Cref{eq: discrim_measure}. The third and forth correspond to $g$ being a conditional density, i.e. integrating to 1 and being non-negative on its support. The final constraint is such that $\mathfrak{g}(\cdot)$ is a uniform density, which is equivalent to $\Q(U_{Y|_\x}\le u ) = u$ for all $u \in (0,1)$. 
Hence, the corresponding Lagrangian with Lagrange multipliers $\bfeta := \big(\eta_1, \ldots,  \eta_{m+2}, \eta_{m+3}(\d,u), \eta_{m+4}(u)\big)$, where we omit the dependence of $\bfeta$ on $\x$, is
\begin{align*}
    \mathcal{L}(g; \bfeta) &= \int_\A \Bigg( g(\d, u) \log\left(\frac{g(\d, u)}{f(\d, u)}\right) + g(\d, u) \sum_{i \in \M} \eta_i \Phi_i (\x, \d, u) 
    +
    \eta_{m+1} h(\x, \d) \; g(\d, u)+ \eta_{m+2} \; g(\d,u)\\
    &\qquad  - \eta_{m+3}(\d,u) g(\d, u) + \eta_{m+4}(u)\left(g(\d, u)-1\right) \Bigg)\; d(\d, u) \\
    &\qquad- \eta_{m+1} \int_A h(\x, \d)\; f(\d, u) d(\d, u) - \eta_{m+2}\;,
\end{align*}
and the Lagrange multipliers  $\bfeta $, 
satisfy $\eta_i \in \R$, $i \in \M \cup \{m+ 1, m+2\}$, $\eta_{m+3}(d,u) \geq 0$ for all $(\d,u) \in \A$, and $\eta_{m+4}(u) \in \R$ for all $u \in (0,1)$. The associated Euler-Lagrange equation is
\begin{equation}
    \log\left(\frac{g(\d, u)}{f(\d, u)}\right) + 1 + \sum_{i \in \M} \eta_i \Phi_i (\x, \d, u) + \eta_{m+1} h(\x, \d)  + \eta_{m+2} - \eta_{m+3}(\d,u) + \eta_{m+4}(u) = 0 \;.
\end{equation}

Imposing $g(\d, u) \geq 0$ for all $(\d, u) \in \A $ yields $\eta_{m+3}(\d, u) = 0$ for all $(\d, u) \in \A$, as 
\begin{equation}\label{eq:frac-g-f}
    \frac{g(\d, u)}{f(\d, u)} = \exp\left\{-\sum_{i \in \M} \eta_i \Phi_i (\x, \d, u) - \eta_{m+1} h(\x, \d) - \eta_{m+2}+   \eta_{m+4}(u) \right\}\;, 
\end{equation}
where we reparametrised $\eta_{m+2}:= \eta_{m+2}+1$.
Next, imposing the constraint that $\mathfrak{g}(u) = 1$, we obtain
\begin{align*}
    1 &= \int_{\supp{\D}}g(\d, u)d\d  
    \\
    &= 
    \int_{\supp{\D}}f(\d, u)\exp\left\{-\sum_{i \in \M} \eta_i \Phi_i (\x, \d, u) - \eta_{m+1} h(\x, \d)  - \eta_{m+2}  +\eta_{m+4}(u) \right\}d\d
    \\&= 
    e^{- \eta_{m+2} }\, e^{ \eta_{m+4}(u) }\,C_{\x, u} \,,
\end{align*}
where we define $C_{\x, u}:= \int_{\supp{\D}}f(\d, u)e^{-\sum_{i \in \M} \eta_i \Phi_i (\x, \d, u) - \eta_{m+1} h(\x, \d)}d\d$. Thus, $e^{ \eta_{m+4}(u) } =  e^{ \eta_{m+2}} / C_{\x, u}$ and \eqref{eq:frac-g-f} simplifies to
\begin{equation*}
        \frac{g(\d, u)}{f(\d, u)} = \frac{1}{C_{\x,u}}\exp\left\{-\sum_{i \in \M} \eta_i \Phi_i (\x, \d, u) - \eta_{m+1} h(\x, \d) \right\}\;. 
\end{equation*}
We note that $\eta_{m+2}$ cancels out and that $g(\d, u)$ integrates to one. Indeed,  
\begin{align*}
    \int_{\A} g(\d, u) d(\d, u)
    &=\int_0^1\int_{\supp{\D}}\frac{1}{C_{\x,u}} \e^{-\sum_{i \in \M} \eta_i \Phi_i (\x, \d, u) - \eta_{m+1} h(\x, \d) }  f(\d, u) d(\d,u)
    \\
    &=\int_0^1 \frac{1}{C_{\x,u}}\int_{\supp{\D}}\e^{-\sum_{i \in \M} \eta_i \Phi_i (\x, \d, u) - \eta_{m+1} h(\x, \d) } f(\d, u)d\d  \, du
    \\
    &= 1\,.
\end{align*}
Thus, the  Radon-Nikodym derivative becomes
\begin{equation*}
\frac{d\Q^\bfeta_\x}{d\P} = \frac{g(\D, U_{Y|_\x})}{f(\D, U_{Y|_\x})}=\frac{1}{C_{\x}}\e^{-\sum_{i \in \M} \eta_i \Phi_i (\X, \D, U_{Y|_\x}) - \eta_{m+1} Y} \;,
\end{equation*}
where the random variable $C_{\x}$ is define in the statement.
We observe that since $\P$ and $\Q_\x^\bfeta$ are equivalent, we have that $(Y, U_{Y|_\x})$ is comonotone under both $\P$ and $\Q$. 

For uniqueness, recall that  the KL-divergence is strictly convex in $\frac{d \Q}{d \P}$. Moreover, the sensitivity and the conditional expectation constraints are linear in $\frac{d \Q}{d \P}$. Additionally, the constraint that $U_{Y|_\x}$  is uniform under $\Q$ is also linear in the RN derivative. Indeed for all $u \in (0,1)$, we have
that $\Q(U_{Y|_\x} \le u) =\Ex[\Id_{\{U_{Y|_\x} \le u\}} \frac{d \Q}{d \P}]$.
Thus, if a solution exists, it must be unique. Moreover, assuming that a solution to Optimisation Problem \eqref{eq: discrim_measure} exists is equivalent to the existence of optimal Lagrange multipliers $\bfeta^*$ such that the constraints are fulfilled. Thus, the RN derivative is $\frac{d\Q^{\bfeta^{*}}_\x}{d\P}$, and for simplicity of exposition we write $\Q_\x^*:= \Q_\x^{\bfeta*}$.
\hfill \Halmos

Recall that $U_{Y|_\x}$ is under $\P$ a uniform rv that is comonotonic to $Y|_\x$. Furthermore, by construction of $\Q^*_\x$, we also have that $U_{Y|_\x}$ is under $\Q^*_\x$ a uniform rv. Moreover, since $\P$ and $\Q^*_\x$ are equivalent, the rv $U_{Y|_\x}$ is comonotonic to $Y|_\x$. Thus, rewriting the sensitivity constraint as \eqref{eq:sens-q-unit} is justified.

Of particular interest is that the pricing measure depends on all covariates, the aggregation function, and the pricing principle, as all these components may propagate or even amplify discrimination.
 
For the existence of the discrimination-insensitive measure, we require the definition of the conditional cumulant generating function of $\W_{\x,u} := \big(\Phi(\X, \D, U_{Y|\x}), \, Y - \Ex[Y]\big)^\intercal$, given $\X = \x$ and $U_{Y|\x} = u$, which we denote by 
\begin{equation*}
    \mathbf{K}_{\W|_{\x, u}}(\t) = \log\left(\E_{\x, u} \left[ \e^{\t^\intercal \W} \right]\right)\,,
    \quad \text{for} \quad \t \in \R^{m+1}\,,
\end{equation*}
where we denote by $\E_{\x,u}[\cdot]$ the expectation conditional on $\X = \x$ and $U_{Y|\x} = u$. 

\begin{theorem}[Existence]\label{thm:existence}
Let Assumption \ref{asm: cont} be satisfied and let $\W|_{\x,u}$ be non-degenerate under $\P$ for all $\x \in \supp{\X}$ and $u \in (0,1)$. Assume that for each $(\x, u)$, $|\nabla_\bt\,  \mathbf{K}_{\W |_{\x, u}}(\bt) |$ is dominated by an integrable function, and there exists a solution to the system of equations in $\bfeta$
\begin{equation}
\label{eq:sol-eta}
    \int_0^1\nabla_\bt\,  \mathbf{K}_{\W |_{\x, u}}(\bt) \big\vert_{\bt = -\bfeta}\, du = 0 \,,
\end{equation}

where $\nabla_{\bt}$ denotes the gradient with respect to $\bt$. Then 
\begin{enumerate}[label = $\roman*)$]
    \item the solution to \eqref{eq:sol-eta}, denoted by $\bfeta^\star$, is unique,

    \item $\bfeta^\star$ is the optimal Lagrange multipliers of the RN derivative in  \Cref{thm:representation}.
\end{enumerate}
\end{theorem}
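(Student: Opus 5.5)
The plan is to recognise the left-hand side of \eqref{eq:sol-eta} as the gradient of a strictly convex function of $\bfeta$, and to identify that gradient with the constraint residuals of the tilted measure from \Cref{thm:representation}. Define
\begin{equation*}
    \Psi(\bt) := \int_0^1 \mathbf{K}_{\W|_{\x,u}}(\bt)\, du\,, \qquad \bt \in \R^{m+1}\,.
\end{equation*}
By the domination assumption on $|\nabla_\bt \mathbf{K}_{\W|_{\x,u}}(\bt)|$, I will use dominated convergence to differentiate under the integral. This gives $\nabla \Psi(\bt) = \int_0^1 \nabla_\bt \mathbf{K}_{\W|_{\x,u}}(\bt)\,du$. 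Hence \eqref{eq:sol-eta} reads $\nabla\Psi(-\bfeta)=0$, i.e.\ $-\bfeta$ is a critical point of $\Psi$.

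For part $i)$, the key step is strict convexity of $\Psi$. For each $(\x,u)$, the cgf $\mathbf{K}_{\W|_{\x,u}}$ is convex by Hölder's inequality. Its Hessian equals the covariance matrix of $\W$ under the exponentially tilted conditional law, which has density proportional to $\e^{\bt^\intercal \W}$ with respect to $\P(\cdot\mid \X=\x, U_{Y|\x}=u)$. This tilted law is equivalent to the untilted one. Non-degeneracy of $\W|_{\x,u}$ under $\P$ therefore transfers to the tilted measure, so the covariance is positive definite and $\mathbf{K}_{\W|_{\x,u}}$ is strictly convex. Integrating over $u\in(0,1)$ preserves strict convexity, so $\Psi$ has at most one critical point, and hence $\bfeta^\star$ is unique.

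To avoid imposing second-derivative regularity, I would run the strict convexity argument through the strict inequality in Hölder rather than through the Hessian. Equality in Hölder would force $\bt_1^\intercal\W$ and $\bt_2^\intercal\W$ to differ by a constant a.s., which contradicts non-degeneracy.

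For part $ii)$, I plug $\bfeta^\star$ into the RN derivative of \Cref{thm:representation} and check the constraints. Write $\eta_{m+1}Y$ as $\eta_{m+1}(Y-\Ex[Y])$; the constant is absorbed into $C_{\x,u}$. Conditional on $U_{Y|\x}=u$, the normalisation $C_{\x,u}$ is exactly $\exp\{\mathbf{K}_{\W|_{\x,u}}(-\bfeta)\}$. Hence
\begin{equation*}
    \E_{\x,u}\Big[\tfrac{d\Q}{d\P}\,\W\Big] = \nabla_\bt \mathbf{K}_{\W|_{\x,u}}(\bt)\big|_{\bt=-\bfeta}\,.
\end{equation*}
Since $U_{Y|\x}$ is uniform under both $\P$ and $\Q$ (this is built into the representation), integrating over $u$ gives
\begin{equation*}
    \Ex^{\Q}[\W] = \int_0^1 \nabla_\bt \mathbf{K}_{\W|_{\x,u}}(-\bfeta)\,du\,.
\end{equation*}
Its first $m$ components are $\partial_{D_i}\rho^{\Q}(Y|_\x)$, via \eqref{eq:sens-q-unit} and \Cref{lemma:rep-sensitivity}. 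Its last component is $\Ex^\Q[Y]-\Ex[Y]$. So \eqref{eq:sol-eta} is exactly the statement that all constraints of \eqref{eq: discrim_measure} hold.

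The uniqueness part of \Cref{thm:representation} then identifies $\Q^{\bfeta^\star}_\x$ with the unique minimiser. Alternatively, I can verify optimality directly: for any feasible $\Q$,
\begin{equation*}
    D_{KL}(\Q\|\P) - D_{KL}(\Q^{\bfeta^\star}\|\P) = D_{KL}(\Q\|\Q^{\bfeta^\star}) \ge 0\,.
\end{equation*}
This identity holds because $\log\frac{d\Q^{\bfeta^\star}}{d\P}$ is a linear combination of the constrained quantities plus a function of $U_{Y|\x}$, whose expectation is fixed by the uniformity constraint.

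I expect the main obstacle to be the transfer of non-degeneracy to the tilted measures, and more generally the strict convexity of $\Psi$. The integrability needed to apply Hölder and differentiate under the integral must hold in a neighbourhood of $-\bfeta$, and I would handle this with the stated dominating function.
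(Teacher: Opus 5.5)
Your proposal is correct and follows essentially the same route as the paper. Part $i)$ rests on strict convexity of $\bt \mapsto \int_0^1 \mathbf{K}_{\W|_{\x,u}}(\bt)\,du$, obtained from the covariance of $\W$ under the exponentially tilted equivalent law, which inherits non-degeneracy from $\P$, while part $ii)$ identifies $\int_0^1 \nabla_\bt \mathbf{K}_{\W|_{\x,u}}(\bt)\vert_{\bt=-\bfeta}\,du$ with the sensitivity and expectation residuals of the tilted measure in \Cref{thm:representation}. Your strict-H\"older variant, which avoids the second-derivative interchange the paper makes with only a gradient domination, and the identity $D_{KL}(\Q\,\|\,\P) = D_{KL}(\Q\,\|\,\Q^{\bfeta^\star}_\x) + D_{KL}(\Q^{\bfeta^\star}_\x\,\|\,\P)$ are small but useful refinements; keep the latter instead of the appeal to uniqueness in \Cref{thm:representation}, because uniqueness of a minimiser does not by itself show that $\Q^{\bfeta^\star}_\x$ is one.
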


\textit{Proof} 
For part $i)$, we first show that the joint cgf of a non-degenerate random vector $\Z:= (Z_1, \ldots, Z_n)$, where $Z_i \in \Lp$, $i \in\mN$, $\mathbf{K}_{\Z}(\t) = \log(\E[\e^{\t^\intercal \Z}])$ is strictly convex in $\t \in \R^n$. 
The first partial derivative of the cgf with respect to the $k$-th element of $\t$, $k \in \mN$, is
\begin{equation*}
    \partial_{t_k} \mathbf{K}_\Z(\t) = \partial_{t_k}\log\left(\E[\e^{\t^\intercal \Z}] \right)
    = \frac{\E[Z_k \e^{\t^\intercal \Z}]}{\E[\e^{\t^\intercal \Z}]} = \E^{\Tilde{\Q}}[Z_k]\;,
\end{equation*}
where $\Tilde{\Q}$ is defined via the RN derivative $\frac{d\Tilde{\Q}}{d \P} := \frac{\e^{\t^\intercal \Z}}{\E[\e^{\t^\intercal \Z}]}$. Then, the second partial derivative now with respect to the $k$-element of $\t$, $k \in \mN$, is
\begin{align}
   \partial_{t_j} \partial_{t_k} \mathbf{K}_\Z(\t) &= \E^{\Tilde{\Q}}[Z_j Z_k] - \E^{\Tilde{\Q}}[Z_j]\;\E^{\Tilde{\Q}}[Z_k] = \Cov^{\Tilde{\Q}}(Z_j, Z_k) \label{eq:cov}.
\end{align}
For \Cref{eq:cov} to be positive for all $j, k \in \mN$, the covariance matrix $\Cov(\Z):= \{\Cov^{\Tilde{\Q}}(Z_j, Z_k) \}_{j,k \in \mN}$ must be a positive definite matrix, which is true if and only if $\Z$ non-degenerate under $\Tilde{\Q}$. Moreover, $\Z$ being non-degenerate under $\Tilde{\Q}$ is equivalent to $\Z$ being non-degenerate under $\P$. Thus, as $(\x, u) \in \supp{\X}\times (0,1)$ the random vector $\W|_{\x, u}$ is non-degenerate, we have that $\bK_{\W|_{\x, u}}(\bt)$ is strictly convex in $\bt$. Moreover, for all $j,k \in \{1, \ldots, m+1\}$, we have by the dominated convergence theorem that
\begin{equation*}
   \partial_{t_j} \partial_{t_k} \int_0^1\bK_{\W|_{\x, u}}(\bt)\, du
   =
   \int_0^1\partial_{t_j} \partial_{t_k}\bK_{\W|_{\x, u}}(\bt)\, du >0 \,,
\end{equation*}
where the last inequality follows by \eqref{eq:cov}. 

For part $ii)$, we write with slight abuse of notation $\bfeta^* = (\eta_1^*, \ldots, \eta_m^*)$. The sensitivity constraints of Optimisation Problem \ref{eq: discrim_measure} vanishes for all $i \in \M$ if and only if 
\begin{align}
    \partial_{D_i} \rho^{\Q^*_\x}(Y|_\x)  
    &=
    \Ex\left[\Phi_i(\X,\D, U_{Y|_\x}) \, \frac{d\Q^{*}_\x}{d\P}\right]
    \nonumber
    \\[0.5em]
    &=
        \Ex\left[\Phi_i(\X,\D, U_{Y|_\x}) \, \frac{1}{C_{\x}}\e^{- \bfeta_\x^{*\intercal} \Phi (\X, \D, U_{Y|_{\x}}) - \eta^*_{m+1} Y}\right]
   \notag\\[0.5em]
    &=
    \int_0^1\frac{1}{C_{\x, u}}\E_{\x,u}\left[\Phi_i(\X,\D, U_{Y|_{\x}}) \, \e^{-\bfeta_\x^{*\intercal} \Phi (\X, \D, U_{Y|_{\x}}) - \eta^*_{m+1} Y}\right]\, du
     \label{eq:cgf-constraints-sens}\\[0.5em]
     &=0\;,
    \notag
\end{align}
where $C_{\x,u}$ is defined in the proof of \Cref{thm:representation}.
The conditional expectation constraint, we rewrite similarly,  
\begin{align}
    \Ex^{\Q^{*}_\x}[Y - \Ex[Y]] &= \Ex \left[\left(Y-\Ex[Y]\right) \, \frac{d\Q^{*}_\x}{d\P}\right]
    \nonumber\\[0.5em]
        &=\Ex\left[\left(Y - \Ex[Y]\right)\frac{1}{C_{\x}}\e^{-\bfeta_\x^{*\,\intercal} \Phi(\X,\D, U_{Y|_\x}) - \eta_{m+1}^* Y}\right] 
        \notag\\[0.5em]
        &= \int_0^1\frac{1}{C_{\x, u}}\E_{\x, u}\left[\left(Y - \Ex[Y]\right)\e^{-\bfeta_\x^{*\,\intercal} \Phi(\X,\D, U_{Y|_\x}) - \eta_{m+1}^* Y}\right] du 
         \label{eq:cgf-constraints-expec}\\[0.5em]
        &= 0\;.\notag
\end{align}
Hence, for each $i \in \bar{\M} =:\M \bigcup \{m+1\}$, \eqref{eq:cgf-constraints-sens} and \eqref{eq:cgf-constraints-expec} can be written as 
\begin{align}
   \int_0^1\tfrac{\partial}{\partial_{t_i}} \; \bK_{\W|_{\x, u}} (\t) \vert_{\t = -\bfeta} \, du &= 0\,.
\end{align}

Hence the solution to \eqref{eq:sol-eta} are indeed the Lagrange multipliers for which the constraint of Optimisation Problem \eqref{eq: discrim_measure} are satisfied. 
\hfill \Halmos

\begin{remark}
    In situations where the expectation constraint in Optimisation Problem \eqref{eq: discrim_measure} is not desired, all results apply by setting $\eta_{\x, ,+1}^*$ in \Cref{thm:representation} equal to zero. We also refer to \Cref{lemma: marginal-measures}, which gives the solution to Optimisation Problem \eqref{eq: discrim_measure} with only the sensitivity constraint.
\end{remark}
\vspace{1em}

When the risk measure is the expected value, i.e. $\gamma(u) \equiv 1$, then the formulas simplify. The proofs follow along the lines of the proofs of \Cref{thm:representation} and \Cref{thm:existence}, by noting that if $\gamma(u) \equiv 1$, then the constraint $\Q(U_{Y|_\x} \le u)  = u$, $u \in (0,1)$, is redundant.
\begin{corollary}\label{cor:representation-mean}
Let Assumption \ref{asm: cont} be satisfied and $\gamma(u) \equiv 1$. Denote a solution to \Cref{eq: discrim_measure} if it exists by $\Q_\x^{*}$. Then $\Q_\x^{*}$ is unique and its Radon-Nikodym (RN) derivative to $\P$ admits representation
\begin{align*}
    \frac{d\Q^{*}_\x}{d\P} &= \frac{\e^{-\bfeta_\x^{*\,\intercal} \Phi(\X,\D, U_{Y|_\x}) - \eta_{m+1}^* Y }}{\E_\x\left[\e^{-\bfeta_\x^{*\,\intercal} \Phi(\X,\D, U_{Y|_\x}) - \eta_{m+1}^* Y }\right]} \;,
\end{align*}
where $\bfeta^*_\x  \in \R^m$ are the Lagrange multipliers such that $\partial_{D_i} \rho^{\Q^*_\x}(Y |_{\x}) = 0$ for all $i \in \M$, and $\eta^*_{x, m+1} \in \R$ is the Lagrange multiplier such that $\Ex^\Q[Y] = \Ex^\P[Y]$ holds.
\end{corollary}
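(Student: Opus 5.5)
The plan is to repeat the Lagrangian argument from the proof of \Cref{thm:representation}, using the fact that for $\gamma \equiv 1$ the function $\Phi_i(\x,\d,u) = d_i\,\partial_i h(\x,\d)$ does not depend on $u$.

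The first step is to show that the uniformity constraint on $U_{Y|_\x}$ can be dropped. With $\gamma \equiv 1$, the sensitivity becomes
\[
\partial_{D_i}\rho^\Q(Y|_\x)=\Ex^\Q\big[D_i\,\partial_i h(\X,\D)\big],
\]
which is linear in $\frac{d\Q}{d\P}$ and involves no quantile transform under $\Q$. So the reformulation \eqref{eq:sens-q-unit} is not needed, and Optimisation Problem \eqref{eq: discrim_measure} reduces to minimising $D_{KL}(\Q\,\|\,\P)$ over densities $g$ of $\D|_\x$ (or of $(\D,U_{Y|_\x})|_\x$ with $g$ free in $u$). The remaining constraints are the $m$ linear sensitivity constraints, the conditional expectation constraint, normalisation and nonnegativity.

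The second step is to write the Lagrangian exactly as before but omit the multiplier $\eta_{m+4}(u)$. The Euler--Lagrange equation then gives
\[
\frac{g}{f}=\exp\Big\{-\sum_{i\in\M}\eta_i\Phi_i-\eta_{m+1}h(\x,\d)-\eta_{m+2}\Big\}.
\]
As before, the nonnegativity multiplier vanishes because the right-hand side is strictly positive. Normalisation then fixes $e^{-\eta_{m+2}}$ as the reciprocal of the unconditional $\E_\x$ of the exponential, rather than the $u$-wise constant $C_{\x,u}$. Replacing $h(\x,\D)$ by $Y$ is handled exactly as in \Cref{thm:representation}, which yields the stated normalisation $\E_\x[\,\cdot\,]$ in the denominator.

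The third step is uniqueness. It follows as in \Cref{thm:representation}: the KL divergence is strictly convex in $\frac{d\Q}{d\P}$, all constraints are linear, and the feasible set is therefore convex, so there is at most one minimiser. The representation also shows $\Q^*_\x\sim\P$.

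The main obstacle is justifying that the Lagrangian stationary point is in fact the minimiser, rather than just a critical point. I would address this with the standard duality argument. For any feasible $\Q$, the inequality
\[
D_{KL}(\Q\,\|\,\P)\ge D_{KL}(\Q\,\|\,\Q^*_\x)+D_{KL}(\Q^*_\x\,\|\,\P)
\]
holds with equality in fact. This follows because $\log\frac{d\Q^*_\x}{d\P}$ is an affine combination of the constrained quantities, so its $\Q$-expectation equals its $\Q^*_\x$-expectation. Minimality and uniqueness then follow directly.
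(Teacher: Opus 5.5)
Your proposal is correct and follows the paper's route. You rerun the Lagrangian argument of \Cref{thm:representation} with the uniformity constraint on $U_{Y|_\x}$ removed, which is legitimate because for $\gamma\equiv 1$ the $\Phi_i$ do not depend on $u$; the normaliser then becomes the unconditional $\E_\x[\,\cdot\,]$ instead of $C_{\x,u}$, and uniqueness comes from strict convexity of the KL divergence on a convex feasible set, which is what the paper's one-line justification intends. Your Pythagorean identity makes explicit the sufficiency step the paper leaves implicit, and it is also what actually justifies having $Y$ rather than $h(\x,\D)$ in the exponent: a Lagrangian written with $h(\x,\d)$, as in the paper, yields a tilt of the law of $\D$ alone that solves the different constraint $\Ex^{\Q}[h(\x,\D)]=\Ex[h(\x,\D)]$, whereas your identity goes through precisely because $\log\frac{d\Q^*_\x}{d\P}$ is affine in the constrained quantities $\Phi_i$ and $Y$.
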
 
Continuing in the case of the expected value, the existence simplifies to finding the root of the derivative of the cgf of $\W|_{\x} := \big(\Phi(\X, \D, 1), \, Y - \Ex[Y]\big)^\intercal$.
\begin{corollary}
    Let Assumption \ref{asm: cont} be satisfied,  and $\gamma(u) \equiv 1$, and let $\W|_{\x} := \big(\Phi(\X, \D, 1), \, Y - \Ex[Y]\big)^\intercal$ given $\X = \x$, be non-degenerate under $\P$ for all $\x \in \supp{\X}$. Assume there exists a solution to the system of equations in $\bfeta$
\begin{equation}
\label{eq:sol-eta-mean}
    \nabla_\bt\,  \mathbf{K}_{\W |_{\x}}(\bt) \big\vert_{\bt = -\bfeta} = 0 \,.
\end{equation}
Then 
\begin{enumerate}[label = $\roman*)$]
    \item the solution to \eqref{eq:sol-eta-mean}, denoted by $\bfeta^\star$, is unique,
    \item $\bfeta^\star$ is the optimal Lagrange multipliers of the RN derivative in  \Cref{cor:representation-mean}.
\end{enumerate}
\end{corollary}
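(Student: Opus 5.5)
The plan is to specialise the proof of \Cref{thm:existence} to the case $\gamma \equiv 1$. With $\gamma\equiv1$ we have $\Phi_i(\x,\d,u) = d_i\,\partial_i h(\x,\d)$, which does not depend on $u$. So $\Phi(\X,\D,U_{Y|_\x}) = \Phi(\X,\D,1)$, and the uniformity constraint on $U_{Y|_\x}$ is redundant.

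By \Cref{cor:representation-mean}, the candidate RN derivative is
\[
\frac{d\Q^\bfeta_\x}{d\P}=\frac{\e^{\t^\intercal \W|_\x}}{\E_\x[\e^{\t^\intercal \W|_\x}]}\Big|_{\t=-\bfeta}.
\]
Here $\W|_\x = (\Phi(\X,\D,1),\, Y-\Ex[Y])^\intercal$. The shift by the constant $\Ex[Y]$ is harmless: it multiplies numerator and denominator by the same constant $\e^{\eta_{m+1}\Ex[Y]}$, so it cancels.

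For part $i)$, I will repeat the cgf convexity argument from the proof of \Cref{thm:existence}, but conditional on $\X=\x$ only. Differentiating under the expectation gives $\partial_{t_k}\bK_{\W|_\x}(\t)=\E^{\Qtilde}[W_k]$ and $\partial_{t_j}\partial_{t_k}\bK_{\W|_\x}(\t)=\Cov^{\Qtilde}(W_j,W_k)$. Here $\frac{d\Qtilde}{d\P}$ is the exponential tilt, which is equivalent to $\P$ given $\X=\x$. Non-degeneracy under $\P$ therefore transfers to $\Qtilde$, so the Hessian is positive definite and $\bK_{\W|_\x}$ is strictly convex on the (convex) domain where it is finite. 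A strictly convex differentiable function has at most one critical point, so the root of \eqref{eq:sol-eta-mean} in $\t$ is unique. Hence $\bfeta^\star=-\t$ is unique as well.

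For part $ii)$, I will plug the representation into the constraints of \eqref{eq: discrim_measure}. For each $i\in\M$,
\[
\partial_{D_i}\rho^{\Q^\bfeta_\x}(Y|_\x)=\Ex\Big[\Phi_i\,\tfrac{d\Q^\bfeta_\x}{d\P}\Big]=\partial_{t_i}\bK_{\W|_\x}(\t)\big|_{\t=-\bfeta}.
\]
Similarly, $\Ex^{\Q^\bfeta_\x}[Y-\Ex[Y]]=\partial_{t_{m+1}}\bK_{\W|_\x}(\t)|_{\t=-\bfeta}$. Thus the constraints hold exactly when \eqref{eq:sol-eta-mean} holds. The solution $\bfeta^\star$ therefore gives a feasible measure of the form in \Cref{cor:representation-mean}.

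It remains to argue optimality, which I expect to be the main subtle point. I will use the standard KL duality (Gibbs) inequality. For any feasible $\Q\ll\P$,
\[
D_{KL}(\Q\|\P)\ge \E^\Q\Big[\log\tfrac{d\Q^{\bfeta^\star}_\x}{d\P}\Big] = -\bK_{\W|_\x}(-\bfeta^\star)\,,
\]
where the equality uses the linear constraints. Equality holds for $\Q=\Q^{\bfeta^\star}_\x$. Uniqueness of the minimiser then follows from \Cref{cor:representation-mean}. The step needing care is the differentiation under the expectation, which requires finiteness of the cgf near $-\bfeta^\star$. This is implicit in assuming that a solution of \eqref{eq:sol-eta-mean} exists, so that the cgf is finite and smooth in a neighbourhood of it.
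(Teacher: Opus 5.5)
Your proposal is correct and follows the paper's route. The paper simply specialises the proofs of \Cref{thm:representation} and \Cref{thm:existence} to $\gamma\equiv1$, noting that the uniformity constraint becomes redundant; uniqueness comes from strict convexity of the cgf via the tilted covariance matrix, and the sensitivity and expectation constraints are identified with the components of $\nabla_\bt \bK_{\W|_\x}$ at $\bt=-\bfeta$. Your extra Gibbs-inequality step, showing $D_{KL}(\Q\|\P)\ge -\bK_{\W|_\x}(-\bfeta^\star)$ for every feasible $\Q$ with equality at $\Q^{\bfeta^\star}_\x$, is not in the paper, which relies on the Lagrangian representation for the form of the optimiser; it usefully supplies the sufficiency (optimality) direction that the paper only asserts.
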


There are situations when the system of equations \eqref{eq:sol-eta} does not have a solution. First, we require that the cgf of $\W |_{\x, u}$ exists. In order to preserve generality of this work, we do not restrict the rvs, for example, to be essentially bounded or have subexponential decay; in which cases cgfs always exist. Even if the cgfs exists, the system of equations \eqref{eq:sol-eta} may not have a solution, e.g., if the equations are affine transformation of one another. To illustrate, 
    consider an in insurance scenario in which the premium is calculated using the expected value (i.e. $\gamma(\cdot) \equiv 1$), and the policy holder claim follows a linear regression with aggregation function $h(x,d) = \beta_x x + \beta d + \beta_0 + \mfu$, where $(\beta_x$, $\beta_d$, $\beta_0) \in \R^3$ are coefficients corresponding to the permitted covariate, protected covariate and intercept, respectively, and $\mfu$ is a random error term with mean zero.
    Then, \eqref{eq:sol-eta} becomes
\begin{equation*}
    \e^{\eta_k \beta_x x} \left(\beta_x x - \E_x[Y]\right) = 0\;,
    \quad k = 1,2,
\end{equation*}
which can only be attained as $\eta_k \to -\infty$, $k = 1,2$. As a consequence, $\frac{d\Q}{d\P}$ is not a valid RN derivative, and thus no solution exists. 

We next discuss a simple example illustrating the discrimination-insensitive premia.

\begin{example}[ES risk-loaded expected value premium in a regression setting]
\label{ex: ES-risk-load}
We consider a scenario in which the premium is calculated using the expected value with an Expected Shortfall (ES) loading, that is,  $\gamma(u) = (1 + \frac{c}{1-\alpha}\Id_{\{u > \alpha\}} )$, for $u \in [0,1]$, $\alpha \in (0,1)$ and risk loading parameter $c >0$, yielding a weighted combination of the mean and ES, i.e., $\rho(Y) = \E[Y] + c\;\text{ES}_{\alpha}(X)$. While distortion risk measures are defined for weight functions $\gamma$ that integrate to 1, i.e. $\int_0^1 \gamma(u) du  = 1$, all statements in this manuscript also hold for a weight function that integrates to a positive constant. The policy holder claim follows a linear regression, that is
\[
    h(x,d) = \beta_x x + \beta d  + \beta_0 + \mfu\,,
\]
where $(\beta_x$, $\beta_d, \beta_0) \in \R^3$ are coefficients corresponding to the permitted covariate, protected covariate and intercept, respectively, and $\mfu \sim N(0, \sigma^2)$ is a normal error term with mean zero. The coefficients $\beta_x$, $\beta_d$ and $\beta_0$ are assumed to be known, e.g., through estimation. In this example, 
\begin{equation*}
    \Phi(x, d, u) = d \;\beta \; \left(1 + \frac{c}{1-\alpha} \Id_{\{u> \alpha\}}\right)\;.
\end{equation*}

The sensitivity under $\P$ to the discriminatory covariate becomes
\begin{align*}
    \partial_D \E_x\left[h(x, D) \gamma(U_{Y|_x})\right] 
    &= \E_x\left[\Phi(x, D, U_{Y|_x})\right]
    = \beta \;\E_x\left[D\right] + \frac{\beta \;c}{(1-\alpha)}\E_x\left[D \;\Id_{\{U_{Y|_{x}} > \alpha\}}\right]\;.
\end{align*}

Then by \Cref{thm:representation}, for each realisation of the permitted covariate $\x$, the discrimination-insensitive measures $\Q^*_x$ is characterised by

\begin{equation*}
\frac{d \Q^*_x}{d \P} = \frac{\e^{-\eta_1^* \Phi(X, D, U_{Y|_x}) - \eta_2^* Y}}{\E_{x}[\e^{-\eta_1^* \Phi(X, D, U_{Y|_\x}) - \eta_2^* Y} | U_{Y|_x}] ~}\;,
\end{equation*}

where $\eta^*_1$ and $\eta^*_2$ are the solutions (in $\eta_1$ and $\eta_2$) to the system of equations
\begin{subequations}\label{eq:constraints}
\begin{align}
    \int_0^1\frac{\E_{x,u}\left [\Phi(X, D, U_{Y|_x}) \e^{-\eta_1 \Phi(X, D, U_{Y|_x}) - \eta_2 Y}\right]}{\E_{x,u}\left [\e^{-\eta_1 \Phi(X, D, U_{Y|_x}) - \eta_2 Y}\right]}\; du 
    &= 0 \quad \text{and}
    \\[1em]
    \int_0^1\frac{\E_{x,u}\left[ \left(Y - \E_x[Y]\right) \e^{-\eta_1 \Phi(X, D, U_{Y|_x}) - \eta_2 Y}\right]}{\E_{x,u}\left[ \e^{-\eta_1 \Phi(X, D, U_{Y|_x}) - \eta_2 Y}\right]}\;du  &= 0 \,, 
\end{align}
\end{subequations}
where $\E_{x,u}[\cdot ]$ denotes the expectation conditional on $X = x$ and $|U_{Y|_x} = u$.

Collecting, our discrimination-insensitive premium becomes
\begin{equation*}
    \rho^{\Q^*_x}(Y |_{x}) 
    = \Ex\left[\left(\beta_x X + \beta D + \beta_0 + \mfu \right)\;\gamma\left(U_{Y|_x}\right) \frac{d\Q_x^*}{d\P}\right] \;.
\end{equation*}

Recall that by construction of $\Q_\x^*$, the rv $U_{Y|_x}$ is under $\Q_\x^*$ (and also under $\P$) standard uniformly distributed.  
In contrast, the unaware, the best-estimate, and the discrimination-free premia\footnote{We refer the reader to \cite{LindholmEtAl2022discrimfree}, who introduced the unaware and discrimination-free prices under the expected value principle.  Here, we generalise their definitions to distortion risk premia in order to compare with our methodology.} under $\P$ are respectively
\begin{align*}
 \rho(Y |_{x}) &= \Ex\left[Y \; \gamma\left(U_{Y|_x}\right)\right]\;, 
 \\
 \rho(Y |_{x, d})&= \E\left[Y \; \gamma\left(U_{Y|_x}\right)|X = x,D = d\right] \;,  \qquad \text{and} \\
 \rho_{\text{d.f.}}(Y) &= \sum_{w \in \supp{D}} \P(D = w | X = x)\rho(Y |_{x, w})\;.
\end{align*}

For the numerical implementation, we choose $h(x, d) = \frac{1}{2} x + \frac{1}{4}d + \mfu$, where $\mfu \sim \mathcal{N}\big(0, \big(\frac{1}{2}\big)^2\big)$, and let the discriminatory covariate $D$ represent ``gender". Specifically, $\P(D = -1) = 1 - \P(D = 1) =  \frac{2}{5}$,  where $-1$ represents ``female" and $+1$ represents ``male". The non-discriminatory covariate, or true ``riskiness", $X$, is modelled by
\begin{equation*}
    X \sim 
    \begin{cases}
  \;\mathcal{LN}(2, \frac{1}{3})  & \text{when } D = -1 \,,\\
  \;\mathcal{LN}(\frac{3}{2}, \frac{1}{2}) & \text{when } D = +1\,,
\end{cases}
\end{equation*}
where $\mathcal{LN}$ is the log-normal distribution with parameters on the log scale. This represents the scenario, where a riskier, in both mean and standard deviation, population is a smaller proportion of an insured pool. We consider the ES with tolerance level $\alpha = 0.9$ and risk-loading parameter $c = 0.2$.

In this example, the conditional distribution of $D$ given $X = x$ and of $Y$ given $X = x$ can be analytically calculated.\footnote{The conditional distribution of $D$ given $X = x$ can be easily calculated using Bayes' rule. The distribution of $Y$ given $X = x$ is 
\begin{align*}
     F_{Y|_x}(y) = \P(D = -1 ~|~ X = x) \Psi\left(\frac{y- \beta_x x + \beta}{\sigma}\right) + \P(D = 1 ~|~ X = x) \Psi\left(\frac{y- \beta_x x - \beta}{\sigma}\right)\;,
\end{align*}
where $\Psi$ is the cdf of a standard normal distribution.}
To find the discrimination-insensitive premium, a root solver procedure, and $1,000,000$ realisations of losses are used to find the optimal Lagrange parameters $\eta_1^*$ and $\eta_2^*$ for each $x$ using \eqref{eq:constraints}. This is repeated $50$ times and averaged to find the discrimination-insensitive premium, with non-convergent solutions being rejected.
\begin{figure}[h]
    \centering
    \includegraphics[width = 0.6\textwidth]{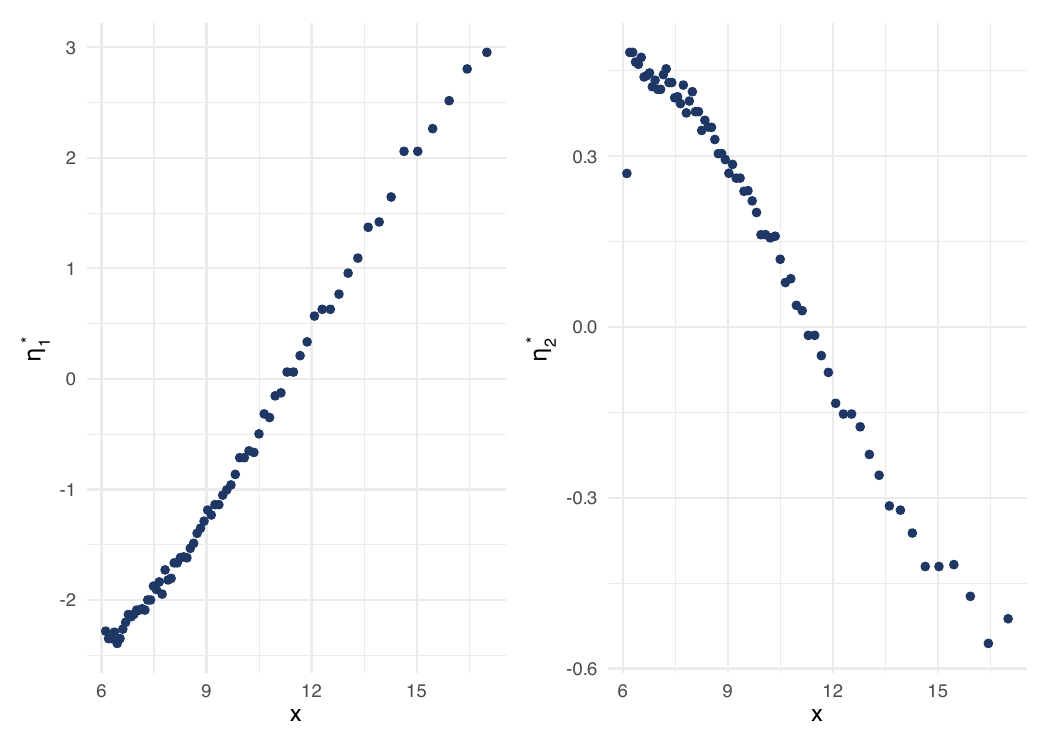}
    \caption{Estimated values of $\eta^*_1$ and $\eta^*_2$ for riskiness $x$.}
    \label{fig:risk-loading-etas}
\end{figure}
In \Cref{fig:risk-loading-etas}, the optimal Lagrange multipliers of one realisation of $X$ are plotted, showing smooth behaviour for both Lagrange multipliers.

\begin{figure}[h]
    \centering
    \includegraphics[width = 0.6\textwidth]{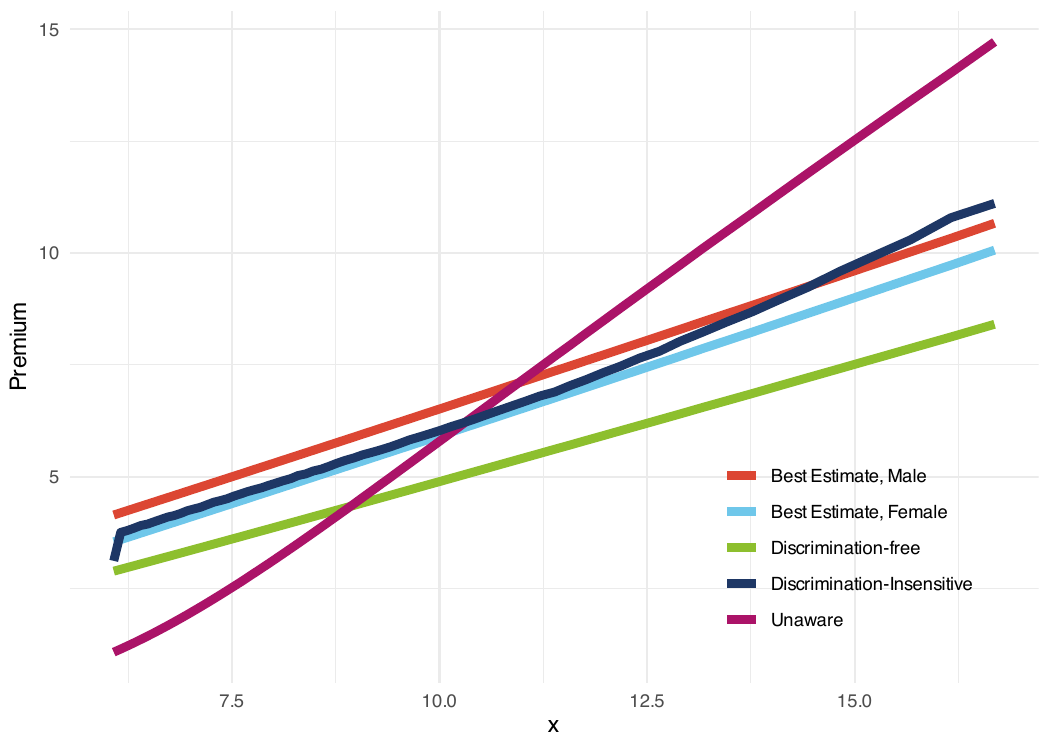}
    \caption{Best-estimate male (red) and female (light blue), discrimination-free (green), unaware (fuchsia), and discrimination-insensitive (dark blue) premia for riskiness $x$.}
    \label{fig:risk-loading-premia}
\end{figure}

\Cref{fig:risk-loading-premia} provides a visual comparison of the different premia. As expected, the best-estimate price of the female population is less than the best-estimate price for the male population. We observe that both the discrimination-free and the discrimination-insensitive premia can be seen as an interpolation of the two best-estimate premia. The discrimination-insensitive premium penalizes riskier behaviour, in that as the policyholder riskiness increases, the premium tends towards and exceeds that of the best-estimate of the riskier subpopulation, the ``males".
    
\end{example}
\section{Barycentre approach}
\label{sec: barycentre}

In situations when the discrimination-insensitive measure does not exist, for example when too many protected covariates are considered, we suggest to utilise an approximate discrimination-insensitive measure constructed via a two step procedure. First, we find marginal discrimination-insensitive measures for each discriminatory covariate separately, and second, reconcile these into a pricing measure via a KL barycentre approach -- that is, we find the probability measure which is closest to all marginal  discrimination-insensitive measures. While the barycentre approach is not truly discrimination-insensitive as defined in \eqref{eq:discrim_measure-sensitivity-constraint} i.e. that the sensitivity to each covariate vanishes, it is approximately discrimination-insensitive since it is closest (in the KL sense) to all of the marginal  discrimination-insensitive measures simultaneously.

In the literature, asymmetric and symmetric KL centroids or barycentres have been studied, with analytical solutions being provided when conditions permit, for example when considering probability mass or density functions, \cite{Veldhuis2002, NielsenNock2009}. These barycentres can be used to determine clusters of probability distributions \cite{BanerjeeEtAl2005}. Centroids are also used as a technique of model merging, sometimes called the method-of-experts or model fusion. Some references for this include \cite{Pelletier2005}, \cite{ClaiciEtAl2020}, and \cite{JaimungalPesenti2024}.
Here, we solve the KL barycentre problem in the probability measure setting, a generalisation of existing results and of interest of its own, in \Cref{thm:barycentre}. Furthermore we solve a constrained barycentre problem relevant to our application in \Cref{thm:constrained-barycentre}. 

As mentioned above, our barycentre approach includes two steps. First calculating marginal discrimination-insensitive measures for each discriminatory covariate separately. Different to the discrimination-insensitive measure defined in \eqref{eq: discrim_measure}, we do not impose the expected value constraint. Instead, the expected value constraint is imposed in the second step, when calculating the constraint barycentre measure. The reason for the expectation constraint is to ensure that the expected losses under $\P$ are equal to the expected losses under the pricing measure, thus this constraint should be imposed on the barycentre measure.

For the first step, we denote by $\Q_i$, $i \in \M$, (if they exist) the $i$-th \textit{marginal discrimination-insensitive measure} as the solution to 
\begin{equation}
     \argmin_{\Q \ll \P} D_{KL}(\Q ~||~\P)
    \quad \text{subject to}\quad
     \partial_{D_i} \rho^\Q(Y |_{\x}) = 0 \;.
     \label{eq: i-problem}
\end{equation}
Thus, $\Q_i$ is the measure that is discrimination-insensitive to the $i$th discriminatory covariate. In step two, the marginal discrimination-insensitive measure are reconciled into a constrained barycentre measure. The existence and uniqueness of $\Q_i$ is guaranteed by the following lemma.

\begin{lemma}
\label{lemma: marginal-measures}
Let the assumption of \Cref{thm:existence} be satisfied for $\W|_{\x,u} = \Phi_i(\X, \D, U_{Y|\x})$. Then for each $i\in \M$, there exist a unique solution of Optimisation Problem \eqref{eq: i-problem}, denoted by $\Q_i$. Moreover, for each $i \in \M$ its RN derivative to $\P$ is given by 
    \begin{equation*}
        \frac{d\Q_i}{d\P} = \frac{\e^{-\eta_i^{*} \Phi_i(\X, \D, U_{Y|_{\x}})} }{\E_{\x}\left[e^{-\eta_i^{*} \Phi_i(\X, \D, U_{Y|\x}) }~\big|~U_{Y|\x}\right]}\,,
    \end{equation*}
    where $\eta_i \in \R$ is the optimal Lagrange multiplier satisfying $\partial_{D_i} \rho^{\Q_i} (Y|_\x) = 0$.
\end{lemma}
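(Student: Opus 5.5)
The plan is to treat Optimisation Problem \eqref{eq: i-problem} as the special case of \eqref{eq: discrim_measure} with only one protected covariate, $D_i$, and no conditional expectation constraint. I would then rerun the arguments of \Cref{thm:representation} and \Cref{thm:existence} in that setting. Concretely, I fix $\x\in\supp{\X}$ and $i\in\M$, and rewrite the sensitivity constraint as in \eqref{eq:sens-q-unit}: $\Ex^\Q[\Phi_i(\X,\D,V_{Y|_\x})]=0$ together with $\Q(V_{Y|_\x}\le u)=u$ for all $u\in(0,1)$. Every constraint is now linear in $\frac{d\Q}{d\P}$. I then pass to densities $g$ of $(\D,U_{Y|_\x})|_\x$, exactly as in the proof of \Cref{thm:representation}. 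The resulting Lagrangian has multipliers $\eta_i$ (sensitivity), a scalar normalisation multiplier, a non-negativity multiplier $\eta(\d,u)\ge 0$, and a function-valued multiplier $\eta(u)$ for the uniformity constraint; the multiplier $\eta_{m+1}$ attached to $h(\x,\d)$ is simply absent.

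The Euler--Lagrange equation gives
\[
g/f=\exp\{-\eta_i\Phi_i(\x,\d,u)-\eta_{m+2}+\eta(u)\}.
\]
This is strictly positive, so complementary slackness forces $\eta(\d,u)=0$. Imposing $\int g(\d,u)\,d\d=1$ for each $u$ determines $e^{\eta(u)-\eta_{m+2}}=1/C_{\x,u}$ with
\[
C_{\x,u}=\E_{\x}\bigl[e^{-\eta_i\Phi_i(\X,\D,U_{Y|\x})}\mid U_{Y|\x}=u\bigr].
\]
This yields the claimed form of $\frac{d\Q_i}{d\P}$. Equivalence of $\Q_i$ and $\P$ follows from positivity, so $U_{Y|_\x}$ stays uniform and comonotone with $Y|_\x$ under $\Q_i$. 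This also justifies the reformulation a posteriori. Equivalently, one could just set $\eta^*_{m+1}=0$ and $\bfeta^*=\eta_i^*e_i$ in \Cref{thm:representation}, as the preceding remark suggests.

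Uniqueness among minimisers follows as before: KL divergence is strictly convex in $\frac{d\Q}{d\P}$ and the feasible set is convex because the constraints are linear.

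For existence, I apply \Cref{thm:existence} with the scalar $W|_{\x,u}=\Phi_i(\X,\D,U_{Y|\x})$. By the lemma's hypothesis, this $W$ is non-degenerate, $\partial_t K_{W|_{\x,u}}$ is dominated by an integrable function, and the equation
\[
\int_0^1\partial_tK_{W|_{\x,u}}(t)\big|_{t=-\eta}\,du=0
\]
has a root. Strict convexity of $\int_0^1K_{W|_{\x,u}}(t)\,du$, obtained from the variance argument \eqref{eq:cov} plus dominated convergence, makes this root $\eta_i^*$ unique. The computation leading to \eqref{eq:cgf-constraints-sens} shows that this equation is exactly $\partial_{D_i}\rho^{\Q_i}(Y|_\x)=0$. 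Hence the measure built from $\eta_i^*$ is feasible.

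The remaining step, which I expect to be the main obstacle, is to show that this feasible measure is actually optimal, rather than merely a stationary point. I would handle it by a direct duality bound, not by appealing to Lagrangian heuristics. For any feasible $\Q$, write
\[
D_{KL}(\Q\|\P)=D_{KL}(\Q\|\Q_i)+\E^\Q\Bigl[\log\tfrac{d\Q_i}{d\P}\Bigr].
\]
Here
\[
\E^\Q\Bigl[\log\tfrac{d\Q_i}{d\P}\Bigr]=-\eta_i^*\E^\Q[\Phi_i]-\E^\Q[\log C_{\x,U}].
\]
The first term is $0$ by feasibility. The second equals $-\int_0^1\log C_{\x,u}\,du$, because $U$ is uniform under $\Q$. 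So this quantity is the same for all feasible $\Q$, and in particular equals $D_{KL}(\Q_i\|\P)$. Therefore $D_{KL}(\Q\|\P)\ge D_{KL}(\Q_i\|\P)$, with equality iff $\Q=\Q_i$. This gives both optimality and uniqueness at once. The only care needed is integrability of $\log C_{\x,U}$ and of $\Phi_i$ under $\Q$, which I would take from the cgf finiteness and domination assumptions.
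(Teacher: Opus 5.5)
Your proof is correct. Its first part, specialising the Lagrangian computation of \Cref{thm:representation} to the single constraint for $D_i$ with $\eta_{m+1}=0$ and applying \Cref{thm:existence} to the scalar $W|_{\x,u}=\Phi_i(\X,\D,U_{Y|_\x})$, is exactly the paper's two-line reduction.

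Where you genuinely depart from the paper is in how optimality and uniqueness are certified. The paper rests on Euler--Lagrange stationarity plus strict convexity of the KL divergence. It also relies on the unproved claim in \Cref{thm:representation} that a minimiser exists iff multipliers meeting the constraints exist. Finally, it asserts that $\eta_i^*$ ``always exists'' because a univariate cgf is strictly convex and differentiable.

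Your identity $D_{KL}(\Q\,||\,\P)=D_{KL}(\Q\,||\,\Q_i)+D_{KL}(\Q_i\,||\,\P)$ for every feasible $\Q$ is the Csisz\'ar I-projection (Pythagorean) argument. The paper uses the same device only later, for \Cref{thm:constrained-barycentre}. It shows directly that the Gibbs-form measure is the global and unique minimiser. So the Lagrangian derivation becomes mere motivation, and no infinite-dimensional KKT sufficiency is needed.

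You also rightly take existence of the root from the hypothesis of \Cref{thm:existence}. Strict convexity gives only uniqueness: if, say, $\Phi_i>0$ a.s., then $\int_0^1\partial_t K_{W|_{\x,u}}(t)\,du>0$ for all $t$, so no root exists.

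The integrability you flag is available. We have $\log C_{\x,u}=K_{W|_{\x,u}}(-\eta_i^*)$ and $K_{W|_{\x,u}}(0)=0$. Hence domination of $\partial_tK_{W|_{\x,u}}$ by an integrable $g(u)$ on the segment between $0$ and $-\eta_i^*$ gives $\abs{\log C_{\x,u}}\le\abs{\eta_i^*}\,g(u)$.

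One caveat, shared with the paper: your comparison class is the reformulated one in \eqref{eq:sens-q-unit}, where $U_{Y|_\x}$ stays uniform under $\Q$. A $\Q$ that alters the law of $Y|_\x$ can satisfy $\partial_{D_i}\rho^{\Q}(Y|_\x)=0$ with $U^{\Q}_{Y|_\x}\neq U_{Y|_\x}$. Your a posteriori remark only shows that $\Q_i$ is feasible for \eqref{eq: i-problem}, not that it beats such measures.
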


\textit{Proof.} Representation of the RN derivative follows from \Cref{thm:representation} in the case where only the $i$th discriminatory covariate is considered, and the Lagrange parameter corresponding to the equal expectation constraint is $0$. Uniqueness follows similarly from \Cref{thm:existence}.  Moreover, the Lagrange multiplier $\eta_i$ is unique and always exists as a univariate cgf is strictly convex and differentiable. 

\hfill \Halmos

In the second step, we merge the marginally discrimination-insensitive measures into a unified model by using a weighted barycentre technique. We further require that the pricing measure preserves the reference conditional expectation of the losses with respect to the permitted covariates.  This is done to avoid cross-subsidies. That is, we define the constrained barycentre measure as the solution to optimisation problem
\begin{equation}
     \argmin_{\Q \ll \P} \sum_{i \in \M} \pi_i\,  D_{KL}(\Q ~||~\Q_i) \quad \text{subject to} \quad \Ex^{\Q}[Y] = \Ex^\P[Y]
     \label{eq: barycentre-problem}
\end{equation}
for weights $\pi_i\in[0,1]$, $i \in \M$, that satisfy $\sum_{i \in \M} \pi_i = 1$.

To derive the constrained barycentre measure and its characterisation, we first solve at the pure barycentre, that is Optimisation Problem \eqref{eq: barycentre-problem} without any constraints. This result might be of independent interest and will play a key role for the proposed constrained barycentre pricing measure.
\begin{theorem}[Barycentre]
\label{thm:barycentre}
Let $\Ex\left[\prod_{i \in \M} \left(\frac{d\Q_i}{d\P}\right)^{\pi_i}\right] < + \infty$. Then the barycentre measure, denoted by $\Qb$ and defined by
\begin{equation}
    \Qb  := \argmin_{\Q \ll \P} \sum_{i \in \M} \pi_i\,  D_{KL}(\Q ~||~\Q_i)
    \label{eq: pure-barycentre}
\end{equation}
exists and is unique. Moreover, its RN derivative to the reference measure $\P$ is given by 
\begin{equation*}
        \frac{d\Q^\mathfrak{b}}{d\P} = \frac{\prod_{i \in \M} \left(\frac{d\Q_i}{d\P}\right)^{\pi_i}}{\Ex\left[\prod_{i \in \M} \left(\frac{d\Q_i}{d\P}\right)^{\pi_i}\right]}\,.
    \end{equation*}    
\end{theorem}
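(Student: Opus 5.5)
The plan is to rewrite the weighted objective as a single KL divergence plus a constant, from which existence, uniqueness and the representation follow at once. Write $Z_i := \frac{d\Q_i}{d\P}$, $G := \prod_{i\in\M} Z_i^{\pi_i}$ and $c := \E[G] \in (0,+\infty)$. Strict positivity of $c$ holds whenever the $\Q_i$ are equivalent to $\P$, as they are by \Cref{lemma: marginal-measures}. In general it holds whenever the supports of the $Z_i$ overlap with positive probability; otherwise every $\Q$ has infinite objective and there is nothing to prove. Define $\Qb$ by $\frac{d\Qb}{d\P} := G/c$, which is a valid probability measure by the integrability assumption.

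Fix any $\Q \ll \P$ with finite objective and set $L := \frac{d\Q}{d\P}$. Finiteness forces $\Q \ll \Q_i$ for every $i$ with $\pi_i>0$, so that $\frac{d\Q}{d\Q_i} = L/Z_i$ holds $\Q$-a.s. Then
\begin{equation*}
\sum_{i\in\M}\pi_i D_{KL}(\Q~||~\Q_i) = \E^{\Q}\Big[\log L - \sum_{i\in\M}\pi_i \log Z_i\Big] = \E^{\Q}\Big[\log\frac{L}{G}\Big] = D_{KL}(\Q~||~\Qb) - \log c,
\end{equation*}
using $\sum_i \pi_i = 1$ to collect the exponents. It follows that the objective is minimised exactly when $D_{KL}(\Q||\Qb)$ is minimised. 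Since the KL divergence is nonnegative and vanishes only when $\Q = \Qb$ (Gibbs' inequality), the minimiser exists, is unique, equals $\Qb$, and has the stated RN derivative. It remains to check that $\Qb$ itself has finite objective; the identity gives the value $-\log c < \infty$, provided the individual terms $D_{KL}(\Qb||\Q_i)$ are finite, so that the split into a sum is legitimate.

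The main obstacle is justifying the splitting of expectations. The term $\E^\Q[\log L - \sum_i \pi_i\log Z_i]$ may be a difference of infinite quantities, and the objective is a sum of terms each in $[0,\infty]$. I would handle this by working with the integrand directly. The identity $\sum_i \pi_i \log(L/Z_i) = \log(L/G)$ holds pointwise $\Q$-a.s. Moreover, each KL term is the $\Q$-expectation of a function bounded below by a $\Q$-integrable quantity, via $x\log x \ge x-1$ under the reference measure. Hence the expectations of the sums can be exchanged with sums of expectations in $(-\infty,\infty]$, which gives the identity in the extended reals.

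If a cleaner route is desired, one can argue as in the proof of \Cref{thm:representation}. The objective is strictly convex in $L$ and the only constraint is $\E[L]=1$, so the Lagrangian first-order condition gives $\log L - \sum_i\pi_i\log Z_i + \lambda = 0$, i.e.\ $L \propto G$. Uniqueness then follows from strict convexity. I would nevertheless present the direct Gibbs-inequality argument, since it yields global optimality without appealing to Lagrangian duality.
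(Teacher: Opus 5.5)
Your proposal is correct and follows essentially the paper's route. Both rewrite $\sum_{i\in\M}\pi_i D_{KL}(\Q\,\|\,\Q_i)$ as $D_{KL}(\Q\,\|\,\Qb)-\log\E\bigl[\prod_{i\in\M}(d\Q_i/d\P)^{\pi_i}\bigr]$ and conclude from nonnegativity of KL with equality only at $\Qb$. Your extended-real bookkeeping via integrable negative parts is a genuine tightening, and it also settles the finiteness of each $D_{KL}(\Qb\,\|\,\Q_i)$ that you flagged; the paper's extra Jensen step asserting that the constant is strictly positive is not needed, and is not strict in general, e.g.\ when all $\Q_i$ coincide.

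One small correction: if $\E\bigl[\prod_{i\in\M}(d\Q_i/d\P)^{\pi_i}\bigr]=0$, the statement is not vacuous but false, since every $\Q$ attains $+\infty$ (so uniqueness fails) and the formula reads $0/0$. Positivity of this expectation is therefore an implicit hypothesis rather than a case with nothing to prove; it holds automatically here because the $\Q_i$ are equivalent to $\P$.
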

\textit{Proof.}
Denote by $\Q^b$ the probability measure induced by the RN density $\frac{d\Q^b}{d\P} = \frac{\prod_{i \in \M} \left(\frac{d\Q_i}{d\P}\right)^{\pi_i}}{\Ex\left[\prod_{i \in \M} \left(\frac{d\Q_i}{d\P}\right)^{\pi_i}\right]}$, and note that $\Q^b$ is, by assumption, well-defined.

Next, we show that the following equivalency holds
\begin{equation}\label{eq:WKL-KL-c}
    \sum_{i \in \M} \pi_i D_{KL}(\Q ~||~\Q_i) = D_{KL}(\Q ~||~ \Q^b) + c \;,
\end{equation}
where $c := - \log\left(\Ex\left[\prod_{i \in \M}\left(\frac{d\Q_i}{d\P}\right)^{\pi_i}\right]\right)$.  We calculate
\begin{align*}
    \sum_{i \in \M} \pi_i D_{KL}(\Q ~||~\Q_i) &= \sum_{i \in \M} \pi_i\; \Ex^\Q\left[\log\left(\frac{d\Q}{d\Q_i}\right)\right]\\
        &= \sum_{i \in \M} \pi_i \; \Ex^\Q\left[\log\left(\frac{d\Q}{d\P} \frac{d\P}{d\Q_i}\right)\right]\\
        &= \Ex^\Q\left[\log\left(\frac{d\Q}{d\P}\right)\right] - \sum_{i \in \M} \pi_i \; \Ex^\Q\left[ \log\left(\frac{d\Q_i}{d\P}\right)\right]\\
        &= \Ex^\Q\left[\log\left(\frac{d\Q}{d\P}\right)\right] - \Ex^\Q\left[ \log\left( \prod_{i \in \M} \left(\frac{d\Q_i}{d\P}\right)^{\pi_i}\right)\right]\\
        &= \Ex^\Q\left[\log\left(\frac{d\Q}{d\P}\right)\right] - \Ex^\Q\left[ \log\left( \frac{\prod_{i \in \M} \left(\frac{d\Q_i}{d\P}\right)^{\pi_i}}{\Ex\left[\prod_{i \in \M} \left(\frac{d\Q_i}{d\P}\right)^{\pi_i}\right]} \Ex\left[\prod_{i \in \M} \left(\frac{d\Q_i}{d\P}\right)^{\pi_i}\right]\right)\right]\\
        &= \Ex^\Q\left[\log\left(\frac{d\Q}{d\P}\right)\right] - \Ex^\Q\left[ \log\left( \frac{\prod_{i \in \M} \left(\frac{d\Q_i}{d\P}\right)^{\pi_i}}{\Ex\left[\prod_{i \in \M} \left(\frac{d\Q_i}{d\P}\right)^{\pi_i}\right]}\right)\right] + c\\
        &= \Ex^\Q\left[\log\left(\frac{d\Q}{d\P}\right)\right] - \Ex^\Q\left[ \log\left(\frac{d\Q^b}{d\P}\right)\right] + c\\
        &= \Ex^\Q\left[ \log\left(\frac{d\Q}{d\P} \frac{d\P}{d\Q^b}\right)\right] + c\\
        &= D_{KL}(\Q ~||~ \Q^b) + c\;.
\end{align*}
As the geometric average is concave, we apply Jensen's inequality to obtain
\begin{equation*}
    c= 
    - \log\left(\Ex\left[\prod_{i \in \M}\left(\frac{d\Q_i}{d\P}\right)^{\pi_i}\right]\right)
    >
    -
    \log\left(\prod_{i \in \M}\Ex\left[\frac{d\Q_i}{d\P}\right]^{\pi_i}\right)
    = 
    \log(1) = 0\;
\end{equation*}
and thus \eqref{eq:WKL-KL-c} holds.

Therefore, \Cref{eq: pure-barycentre} becomes
\begin{equation*}
    \argmin_{\Q \ll \P} \sum_{i \in \M} \pi_i D_{KL}(\Q ~||~\Q_i) = \argmin_{\Q \ll \P}\;  D_{KL}(\Q ~||~ \Q^b) + c\;,
\end{equation*}
which is uniquely attained at $\Q^b$ as $c>0$ and the KL-divergence is strictly convex in the first argument. Thus, indeed $\Q^b$ is the barycentre and we write $\Qb = \Q^b$.
\hfill \Halmos

We observe that $\frac{d\Q^\mathfrak{b}}{d\P}$ is a geometric average of the RN derivatives $\frac{d \Q_i}{\P}$. That the KL barycentre is a geometric average has been obtained in e.g., \cite{NielsenNock2009} for densities.

Next, we derive the constrained barycentre pricing measure. 

\begin{theorem}[Constrained barycentre]
\label{thm:constrained-barycentre}
Let $\Q_i$, $i \in \M$ exist and denote by $\Qd$ the solution to Optimisation Problem \eqref{eq: barycentre-problem}. Then, $\Qd$ exists and is unique, and its RN derivative has representation
\begin{equation*}
\label{eq: RN-const-bary}
    \frac{d\Qd}{d\P} = \frac{\e^{-\lambda_\x \,Y} \;\prod_{i \in \M}\left(\frac{d\Q_i}{d\P}\right)^{\pi_i}}{\Ex\left[\e^{-\lambda_\x \,Y} \;\prod_{i \in \M}\left(\frac{d\Q_i}{d\P}\right)^{\pi_i}\right]}\;, 
\end{equation*}
where $\lambda_\x \in \R$ is the optimal Lagrange multiplier satisfying $\Ex^{\Qd}[Y] = \Ex^\P[Y]$ for all $\x \in \supp{\X}$, and whenever $\Ex\left[\e^{-\lambda_\x Y } \prod_{i \in \M}\left(\frac{d\Q_i}{d\P}\right)^{\pi_i}\right] < +\infty$.
\end{theorem}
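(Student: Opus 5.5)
The plan is to reduce the constrained barycentre problem to a constrained KL projection onto the pure barycentre $\Qb$ of \Cref{thm:barycentre}. Identity \eqref{eq:WKL-KL-c} in that proof holds for every $\Q \ll \P$ and uses no constraint. Hence the objective in \eqref{eq: barycentre-problem} equals $D_{KL}(\Q\,||\,\Qb) + c$, where $c$ does not depend on $\Q$. Problem \eqref{eq: barycentre-problem} is therefore equivalent to
\[
\argmin_{\Q \ll \P} D_{KL}(\Q ~||~ \Qb) \quad \text{subject to} \quad \Ex^{\Q}[Y] = \Ex^\P[Y].
\]
This is a classical minimum relative entropy problem with one linear moment constraint. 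For it to make sense we need $\Qb$ to be well defined, i.e. $\Ex\big[\prod_{i\in\M}(d\Q_i/d\P)^{\pi_i}\big]<\infty$. I would first check that this follows from the stated finiteness assumption at $\lambda_\x$. Failing that, it is immediate by H\"older's inequality, since $\sum_i \pi_i = 1$ and each $d\Q_i/d\P$ has unit mean; so it holds in any case.

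Next I solve the projection problem exactly as in the proof of \Cref{thm:representation}, with $\Qb$ as the reference measure. The Lagrangian in the density $Z = d\Q/d\Qb$ gives the first-order condition $\log Z + 1 + \lambda_\x Y + \nu = 0$. The normalising multiplier $\nu$ is absorbed into the normalisation, so $Z = e^{-\lambda_\x Y}/\E^{\Qb}_\x[e^{-\lambda_\x Y}]$. Multiplying by $d\Qb/d\P$ from \Cref{thm:barycentre} gives the stated RN derivative, because the normaliser of $\Qb$ cancels.

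To make the argument rigorous rather than formal, I would verify optimality directly, using the same trick as in \Cref{thm:barycentre}. Let $\Qd$ be defined by the displayed density, with $\lambda_\x$ chosen so that the constraint holds. For any feasible $\Q$,
\[
D_{KL}(\Q\,||\,\Qb) = D_{KL}(\Q\,||\,\Qd) + \Ex^\Q\!\left[\log \tfrac{d\Qd}{d\Qb}\right] = D_{KL}(\Q\,||\,\Qd) - \lambda_\x \Ex^\Q[Y] - \log \E^{\Qb}_\x[e^{-\lambda_\x Y}].
\]
Because $\Ex^\Q[Y]=\Ex^\P[Y]$ is fixed, the last two terms are constant over the feasible set. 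The objective is then minimised uniquely at $\Q=\Qd$, by nonnegativity of KL divergence and the fact that $D_{KL}(\Q\,||\,\Qd)=0$ iff $\Q=\Qd$. This gives existence and uniqueness at once, and it also shows that $\Qd$ is equivalent to $\P$.

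The main obstacle is the existence of $\lambda_\x$. I would argue as in \Cref{thm:existence}. Let $K(t) := \log \E^{\Qb}_\x[e^{t(Y-\Ex[Y])}]$. This is strictly convex, since $Y|_\x$ is non-degenerate because $h$ is non-constant. The constraint reads $K'(-\lambda_\x)=0$, so the root, if it exists, is unique. Existence of the root requires $\Ex^\P[Y]$ to lie in the interior of the convex hull of the $\Qb$-support of $Y|_\x$, which equals the $\P$-support by equivalence. It also requires the cgf to be finite on a suitable interval; this is where the finiteness hypothesis in the statement enters. I would therefore state existence of $\lambda_\x$ under that integrability condition, noting that $K'$ is continuous and increasing and runs from the essential infimum to the essential supremum of $Y|_\x$ on the domain where the cgf is steep.
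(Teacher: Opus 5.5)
Your proposal is correct and follows essentially the paper's route. The paper likewise combines identity \eqref{eq:WKL-KL-c} with the Pythagorean decomposition $D_{KL}(\Qtilde\,||\,\Qb) = D_{KL}(\Qtilde\,||\,\Qd) + D_{KL}(\Qd\,||\,\Qb)$ for every feasible $\Qtilde$, which holds because $\log(d\Qd/d\Qb)$ is affine in $Y$ and so has the same expectation under all measures satisfying the constraint. Your additional remarks are also sound and make the argument more careful than the paper's: the H\"older bound $\Ex[\prod_i (d\Q_i/d\P)^{\pi_i}]\le 1$ is valid, and existence of $\lambda_\x$ does need an extra integrability/steepness hypothesis, whereas the paper just cites \Cref{thm:existence}, which itself presupposes that a root exists.
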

\textit{Proof.} 
Denote by $\Q'$ the probability measure induced by the RN density 
$$\frac{d\Q'}{d \P}:= \frac{\e^{-\lambda_\x \,Y} \;\prod_{i \in \M}\left(\frac{d\Q_i}{d\P}\right)^{\pi_i}}{\Ex\left[\e^{-\lambda_\x \,Y} \;\prod_{i \in \M}\left(\frac{d\Q_i}{d\P}\right)^{\pi_i}\right]},$$ and note that $\Q'$ is by assumption well-defined.

Next, let $\Qtilde$ be an arbitrary probability measure such that $\Qtilde \ll \P$, $\Ex^{\Qtilde}[Y] = \Ex^\P[Y]$, and $\Qtilde(B) \neq \Q'(B)$ for at least one $B \in \F$.  We will show the following inequality, 
\begin{equation}
\label{eq: barycentre-inequality}
    \sum_{i\in \M} \pi_i D_{KL}(\Q' ~||~ \Q_i) <  \sum_{i\in \M} \pi_i D_{KL}(\Qtilde ~||~ \Q_i) \;,
\end{equation}
which implies that $\Q'$ is the solution to Optimisation Problem \eqref{eq: barycentre-problem}. 

To show \eqref{eq: barycentre-inequality}, we first, observe that 
\begin{equation}\label{eq:dq'-dp}
\frac{d\Q'}{d\P} = \frac{d\Qb}{d\P} \e^{-\lambda_\x Y} M\,,
\end{equation}
where $M := \frac{\Ex\left[\prod_{i = \M} \left(\frac{d\Q_i}{d\P}\right)^{\pi_i}\right]}{\Ex\left[\e^{-\lambda_\x Y}\prod_{i = \M} \left(\frac{d\Q_i}{d\P}\right)^{\pi_i}\right]}$ is a constant.
Moreover, it holds using \eqref{eq:dq'-dp} in the first equation, that
\begin{align*}
    \Ex^\Qb\left[\left(\frac{d\Q'}{d\Qb} - \frac{d\Qtilde}{d\Qb}\right) \log\left(\frac{d\Q'}{d\Qb}\right)\right] 
    &= \Ex^\Qb\left[\left(\frac{d\Q'}{d\Qb} - \frac{d\Qtilde}{d\Qb}\right) \log\left(\e^{-\lambda_\x Y } M\right)\right]\\
    &= \Ex^\Qb\left[\left(\frac{d\Q'}{d\Qb} - \frac{d\Qtilde}{d\Qb}\right) \big(-\lambda_\x Y + \log\left(M\right)\big)\right]\\
    &= -\lambda_\x \left(\Ex^{\Q'}[Y] - \Ex^{\Qtilde}[Y] \right) + \log(M) (1-1)\\
    &= 0\;,
\end{align*}
recalling that $\Q'$ and $\Qtilde$ satisfy the expectation constraint. Using \eqref{eq:WKL-KL-c} and that the above expectation is zero, the RHS of \Cref{eq: barycentre-inequality} is
\begin{align*}
    \sum_{i \in \M} \pi_i D_{KL}
(\Qtilde ~||~ \Q_i) &= D_{KL}(\Qtilde ~||~ \Qb) + c + \Ex^\Qb\left[\left(\frac{d\Q'}{d\Qb} - \frac{d\Qtilde}{d\Qb}\right) \log\left(\frac{d\Q'}{d\Qb}\right)\right] \\
&= \Ex^\Qb\left[\frac{d\Qtilde}{d\Qb} \log\left(\frac{d\Qtilde}{d\Qb}\right)\right] + \Ex^\Qb\left[\left(\frac{d\Q'}{d\Qb} - \frac{d\Qtilde}{d\Qb}\right) \log\left(\frac{d\Q'}{d\Qb}\right)\right] + c\\
&= \Ex^\Qb\left[\frac{d\Qtilde}{d\Qb}\left(\log\left(\frac{d\Qtilde}{d\Qb}\right) - \log\left(\frac{d\Q'}{d\Qb}\right)\right) + \frac{d\Q'}{d\Qb}\log\left(\frac{d\Q'}{d\Qb}\right)\right] + c \\
&= \Ex^\Qb\left[\frac{d\Qtilde}{d\Qb}\left(\log\left(\frac{d\Qtilde}{d\Qb} \frac{d\Qb}{d\Q'}\right)\right) + \frac{d\Q'}{d\Qb}\log\left(\frac{d\Q'}{d\Qb}\right)\right] + c \\
&= \Ex^\Qtilde\left[\log\left(\frac{d\Qtilde}{d\Q'}\right)\right] + \Ex^{\Q'}\left[\log\left(\frac{d\Q'}{d\Qb}\right)\right] + c\\
&= D_{KL}(\Qtilde ~||~ \Q') + D_{KL}(\Q' ~||~ \Qb) + c
\;. \label{eq: constrained-barycentre}
\end{align*}
Finally, again using \eqref{eq:WKL-KL-c}, and since $\Qtilde$ is not equivalent to $\Q'$,
\begin{align*}
D_{KL}(\Qtilde ~||~ \Q') + c + D_{KL}(\Q' ~||~ \Qb)
    &= D_{KL}(\Qtilde ~||~ \Q') + c +  \sum_{i \in \M} \pi_i \; D_{KL}(\Q' ~||~ \Q_i) - c\\
    &= D_{KL}(\Qtilde ~||~ \Q') +  \sum_{i \in \M} \pi_i \; D_{KL}(\Q' ~||~ \Q_i)\\
    &> \sum_{i \in \M} \pi_i \; D_{KL}(\Q' ~||~ \Q_i)\;,
\end{align*}
as claimed, and \eqref{eq: barycentre-inequality} holds. Thus, indeed $\Q'$ is the unique solution to \Cref{eq: barycentre-inequality}, and we use the notation $\Q' =\Qd$.

We further assert that the Lagrange multiplier $\lambda_x \in \R$ exists and is unique, as a consequence of \Cref{thm:existence}.
\hfill \Halmos

\begin{corollary}
    Denote $K_Y^\Qb(t):= \log\left(\Ex^\Qb\left[\e^{t(Y - \Ex[Y])}\right]\right)$, the cgf of $Y - \Ex[Y]$ conditional on $\X = \x$ under the barycentre measure $\Qb$. Then, 
    \begin{enumerate}[label = $\roman*)$]
        \item there exists a unique solution $\lambda_x^* \in \R$ to
    \begin{equation*}
        \partial_{t} \; K_Y^\Qb(t) \big\vert_{t = -\lambda_x} = 0\;,
    \end{equation*}
    \item and moreover $\lambda_x^\star$ is the optimal Lagrange multipliers of the RN derivative in  \Cref{thm:constrained-barycentre}.
    \end{enumerate}
\end{corollary}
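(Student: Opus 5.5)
The plan is to mirror the proof of \Cref{thm:existence}, now for a single scalar random variable under the barycentre measure $\Qb$. First I rewrite the expectation constraint in terms of $\Qb$. By \eqref{eq:dq'-dp},
\begin{equation*}
\frac{d\Q'}{d\Qb} = \frac{\e^{-\lambda_\x Y}}{\Ex^\Qb\left[\e^{-\lambda_\x Y}\right]}\,.
\end{equation*}
Hence the constraint $\Ex^{\Q'}[Y] = \Ex^\P[Y]$ is equivalent to
\begin{equation*}
\frac{\Ex^\Qb\left[(Y-\Ex[Y])\,\e^{-\lambda_\x (Y-\Ex[Y])}\right]}{\Ex^\Qb\left[\e^{-\lambda_\x (Y-\Ex[Y])}\right]} = 0\,.
\end{equation*}
Multiplying numerator and denominator by the constant $\e^{-\lambda_\x\Ex[Y]}$ is harmless. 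By differentiating under the integral, exactly as for $\partial_{t_k}\mathbf{K}_\Z$ in the proof of \Cref{thm:existence}, the left-hand side is $\partial_t K_Y^\Qb(t)$ evaluated at $t=-\lambda_\x$. This establishes part $ii)$: any root of $\partial_t K^\Qb_Y$ at $-\lambda_\x$ is a Lagrange multiplier that makes $\Q'$ feasible. By \Cref{thm:constrained-barycentre}, $\Q'$ is then the unique minimiser $\Qd$.

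For uniqueness in part $i)$, I use the second-derivative computation \eqref{eq:cov} with $\Z = Y-\Ex[Y]$ and $\P$ replaced by $\Qb$. This gives $\partial_t^2 K^\Qb_Y(t) = \mathrm{Var}^{\Tilde\Q}(Y)$, where $\Tilde\Q$ is the exponential tilt of $\Qb$. Because $\Qb\sim\P$ (its density is a geometric mean of the strictly positive densities of \Cref{lemma: marginal-measures}) and $h$ is non-constant with an additive noise $\mfu$, $Y|_\x$ is non-degenerate under $\P$, hence under $\Qb$ and $\Tilde\Q$. So $K^\Qb_Y$ is strictly convex, $\partial_t K^\Qb_Y$ is strictly increasing, and it has at most one root.

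For existence in part $i)$, I note that $\partial_tK^\Qb_Y(t)=\Ex^{\Tilde\Q_t}[Y]-\Ex[Y]$ is continuous and increasing. As $t\to+\infty$ it tends to $\operatorname{ess\,sup}Y|_\x-\Ex[Y]$, and as $t\to-\infty$ it tends to $\operatorname{ess\,inf}Y|_\x-\Ex[Y]$, where the essential bounds are unchanged under the equivalent measure $\Qb$. Since $Y|_\x$ is non-degenerate, the first limit is strictly positive and the second strictly negative. The intermediate value theorem then yields a root $t^*$, and I set $\lambda^*_\x=-t^*$.

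The main obstacle is this existence step. It needs the cgf of $Y$ under $\Qb$ to be finite and differentiable on all of $\R$, or at least on an interval on which the derivative changes sign, together with the limit behaviour of the tilted means. I would first impose the integrability condition already stated in \Cref{thm:constrained-barycentre}, namely finiteness of $\Ex[\e^{-\lambda Y}\prod_i(d\Q_i/d\P)^{\pi_i}]$ for all $\lambda$. That condition is precisely finiteness of the cgf of $Y$ under $\Qb$. Dominated convergence then justifies differentiation under the expectation. If only a finite domain is available, I would instead assume, as in \Cref{thm:existence}, that the root exists and prove only uniqueness and part $ii)$.
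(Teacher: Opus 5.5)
Your part $ii)$ and your uniqueness argument match the paper's proof. The paper also writes $\frac{d\Qd}{d\Qb}=\e^{-\lambda_\x Y}/\Ex^{\Qb}[\e^{-\lambda_\x Y}]$ and centres $Y$; it assumes $\Ex[Y]=0$ without loss of generality, where you carry the constant explicitly. It then reads the constraint as $\partial_t K_Y^{\Qb}(t)|_{t=-\lambda_\x}=0$ and gets uniqueness from strict convexity of the cgf via \Cref{thm:existence}.

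The genuine difference is existence. The paper says part $i)$ ``follows immediately from \Cref{thm:existence}''. That theorem, however, assumes a root exists and delivers only uniqueness, so existence is never actually argued. Your route does prove existence: the tilted mean $\partial_t K_Y^{\Qb}(t)$ is continuous and increasing, its limits are the essential supremum and infimum of $Y|_\x-\Ex[Y]$, these have strictly opposite signs (by non-degeneracy and $\Qb\sim\P$), and the intermediate value theorem gives the root. The cost is assuming the cgf of $Y|_\x$ under $\Qb$ is finite on all of $\R$.

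This is a real strengthening. \Cref{thm:constrained-barycentre} only requires $\Ex[\e^{-\lambda_\x Y}\prod_{i\in\M}(\frac{d\Q_i}{d\P})^{\pi_i}]<+\infty$ at the optimal $\lambda_\x$, not for every $\lambda$. Some such hypothesis is nonetheless unavoidable. If the cgf is infinite for all $t>0$ and $\Ex^{\Qb}[Y]<\Ex[Y]$, then $\partial_t K_Y^{\Qb}<0$ wherever it is defined, and no root exists.

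Two details to tighten:
\begin{itemize}
\item The limit of the tilted mean deserves its one-line proof. If the limit $L$ were below some $a$ with $\Qb(Y-\Ex[Y]>a)>0$, then $ta+\log\Qb(Y-\Ex[Y]>a)\le K_Y^{\Qb}(t)\le tL$ for all $t>0$, which is impossible.
\item Non-degeneracy of $Y|_\x$ should be attributed to the noise $\mfu$ being non-degenerate given $\X=\x$, or assumed outright as in \Cref{thm:existence}. A non-constant $h$ can still have $h(\x,\cdot)$ constant.
\end{itemize}
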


\textit{Proof.}
Part i) follows immediately from \Cref{thm:existence}.

For part ii), first note that
\begin{align*}
    \frac{d\Qd}{d\Qb} 
    &= \e^{-\lambda_\x Y} \frac{\Ex\left[\prod_{i = \M} \left(\frac{d\Q_i}{d\P}\right)^{\pi_i}\right]}{\Ex\left[\e^{-\lambda_\x Y}\prod_{i = \M} \left(\frac{d\Q_i}{d\P}\right)^{\pi_i}\right]}
    \\
    &= \e^{-\lambda_\x Y} \left(\Ex\left[\e^{-\lambda_\x Y}\frac{\prod_{i = \M} \left(\frac{d\Q_i}{d\P}\right)^{\pi_i}}{\Ex\left[\prod_{i = \M} \left(\frac{d\Q_i}{d\P}\right)^{\pi_i}\right]}\right]\right)^{-1}
        \\
    &= \e^{-\lambda_\x Y} \left(\Ex\left[\e^{-\lambda_\x Y}\frac{d\Qb}{d\P}\right]\right)^{-1}
    \\
    &=
    \frac{\e^{-\lambda_\x Y}}{\Ex^{\Qb}[\e^{-\lambda_\x Y}]}\;.
\end{align*}
Assume, without loss of generality, that $\Ex[Y] = 0$. Then, the expectation constraint is
\begin{equation*}
    \Ex^\Qd[Y] = \Ex^\Qb\left[\frac{d\Qd}{d\Qb} Y\right] =\Ex^\Qb\left[\frac{Y \e^{-\lambda_\x Y}}{\Ex^{\Qb}[\e^{-\lambda_\x Y}]}\right] =  \frac{\E^\Qb\left[Y \e^{-\lambda_x Y}\right]}{\E^\Qb\left[\e^{-\lambda_x Y}\right]} = \partial_{t} \; K_Y^\Qb(t)\big|_{t = -\lambda_x} = 0\;,
\end{equation*}
as claimed.
\hfill \Halmos

\section{Application of Discrimination-Insensitive Pricing}
\label{sec: application}
Here, we consider a case study in which we compare the constrained barycentre model to the marginally discrimination-insensitive models, and to the discrimination-insensitive model as described in \Cref{sec: disc-ins-pricing-measure}.
\subsection{Distribution of covariates}
 To do this, we generate a dataset to represent an automotive insurance portfolio. The policyholders' covariates are $(D_1,D_2, X_1, X_2)$, where the first two covariates are the  discriminatory ones and the latter two are permitted covariates. We describe their joint distribution under the real-world probability measure $\P$. The protected  covariate $D_1$ models the attribute gender and its distribution satisfies $\P(D_1 = 1) = 1 - \P(D_1 = -1) = 0.4$, where we identify  $-1$ with ``women" and $1$ with ``men". Thus, 40\% of the dataset corresponds to ``men''. $D_2$ represents ``marital" status, and takes values $-1$ with probability $\frac{3}{8}$, $0$ with probability $\frac{1}{8}$, and $1$ with probability $\frac{1}{2}$, representing three categories ``married", ``single" and ``other", respectively.  The permitted covariate $X_1$ represents the categories of hours driven. Specifically $X_1$ takes values $-1$, $0$ and $1$ that represent ``few", ``some", and ``many" hours driven, respectively.  We take $X_1$ to depend on $D_1$ in the following manner: 
\begin{align*}
    X_{1}|_{D_1 = -1} \;\; &= \begin{cases}
    -1 & \text{ with prob.} \quad  \frac{1}{2}\\
    0 & \text{ with prob.} \quad\frac{1}{4}\\
    1 & \text{ with prob.} \quad\frac{1}{4}
    \qquad \qquad \text{and}
\end{cases}
\\[1em]
X_1|_{D_1 = 1} \;\;&= \begin{cases}
    -1 & \text{ with prob.}\quad \frac{1}{4}\\
    0 & \text{ with prob.} \quad\frac{1}{4}\\
    1 & \text{ with prob.} \quad \frac{1}{2}\,.
\end{cases}
\end{align*}
This, for example, means that in this dataset women are more likely in the category of driving ``few'' hours and less likely in the category of driving ``many'' hours. 

Representing the value of the vehicle, in 1000s, the covariate $X_2$ is dependent on $D_2$ through
\begin{align*}
 &X_2|_{D_2 = -1} \sim \text{tExp}\big(\text{rate} = \tfrac{1}{35},\; \text{ub} = 120\big)\\
 &X_2|_{D_2 = 0} \phantom{-}\sim \text{tExp}\big(\text{rate} = \tfrac{1}{50},\; \text{ub} = 90\big)
 \\
 &X_2|_{D_2 = 1} \phantom{-} \sim\;  \text{tNorm}\big(\text{m} = 50, \text{sd} = 15,\; \text{lb} = 1.5,\; \text{ub} = 100\big)
\end{align*}

where tExp and tNorm are the truncated exponential and normal distributions with respective parameters, and the lower/upper bounds (lb respectively ub) denote the truncation levels.  The error term $\mfu$ is a standard normal rv, and the simulated losses are given by $Y: = h(D_1, D_2, X_1, X_2) + \mfu$, where $h(d_1, d_2, x_1, x_2) =  15 + 3 x_1 + \frac{1}{4} x_2 + 5 d_1 + 3 d_2$. A dataset of 500,000 samples is generated, then a root finding procedure is carried out to find the corresponding optimal Lagrange multipliers. 50 samples are accepted based on convergence criteria of the roots, and average values of the premia and sensitivities are reported, in order to mitigate numerical error from the optimisation.

\subsection{Comparison of pricing measures}

For calculating the insensitive prices, we consider the ES risk-loaded expected value premium, as detailed in \Cref{ex: ES-risk-load} with ES tolerance level $\alpha = 0.9$ and risk-loading parameter $c = 0.2$. We denote by $\Q_1$ (respectively $\Q_2$) the marginal discrimination-insensitive measure with only the sensitivity constraint for $D_1$ (respectively $D_2$), that is the solution to \Cref{eq: i-problem}. $\Q^\dagger$ is the constrained barycentre pricing measure between $\Q_1$ and $\Q_2$, with weights equal to $\pi_1 = \pi_2 = \frac{1}{2}$, and expectation constraint, see \Cref{thm:constrained-barycentre}. Finally, $\Q^*$ is the discrimination-insensitive pricing measure with sensitivity constraint $D_1$ and $D_2$, and $\Q^*_1$ is the discrimination-insensitive pricing measure with sensitivity constraint $D_1$ (respectively $\Q^*_2$ with $D_2$), see \Cref{sec: disc-ins-pricing-measure}. To quantify the discrimination present in a model, we define the \textit{marginal sensitivity} and \textit{total sensitivity} as follows.

\begin{figure}[th]
    \centering
    \includegraphics[width = 0.6\textwidth]{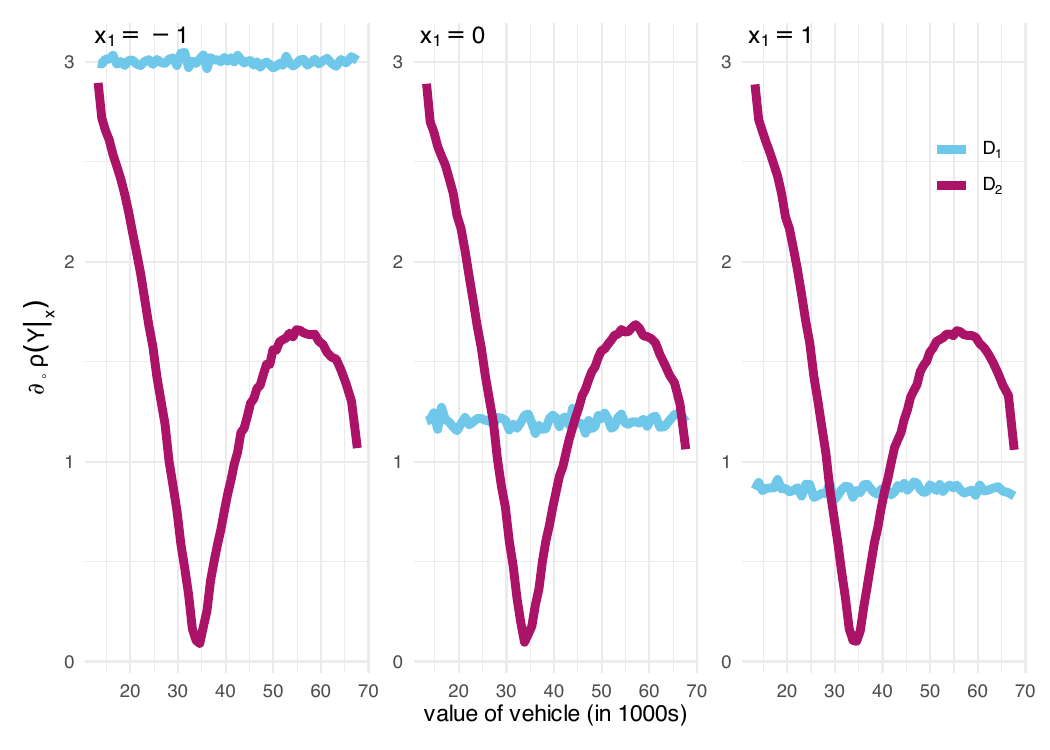}
    \caption{Sensitivities $\partial_{D_i}\rho^{\P}(Y|_x)$, $i = 1,2$ conditional jointly on $X_1$ (hours driven) and $X_2$ (vehicle value) under the reference measure $\P$. Left panel $X_1 =- 1$ (few); centre panel $X_1 = 0$ (some); right panel $X_1 = 1$ (many), and $x$-axis corresponds to realisations of $X_2$. Blue curve corresponds to the sensitivity to $D_1$ and the purple to the sensitivity to $D_2$.}
    \label{fig: sens_P}
\end{figure}

\begin{definition}[Marginal and total sensitivity]
     For a probability measure $\Q\ll \P$, we define the (absolute value of) marginal sensitivity of a distortion premium principle with respect to the $i$th protected covariate as
     \begin{equation*}
         \xi_i(\Q) := \int_{\supp{\X}} \abs{\partial_{D_i} \rho^\Q(Y|_\x)}\; dF_\X(\x)\;.
     \end{equation*}

     Then, the total sensitivity of a distortion premium principle is 
\begin{equation*}
    \xi(\Q) := \sum_{i = 1}^m \xi_i(\Q)\;.
\end{equation*}
\end{definition}

 \Cref{fig: sens_P} displays the total sensitivity of the premia conditional on the permitted covariates $\rho(Y|_\x)$ under the reference measure $\P$, with the left, centre and right panels corresponding to the hours driven of $x_1 = -1$, $x_1 = 0$, and $x_1 = 1$ respectively.  In this figure, we can see that the sensitivity of the pricing principle under $\P$ is, in general, non-zero.  Here, we can see that the sensitivity stemming from $D_1$ (gender) is approximately constant over the value of the vehicle, for each category of hours driven, though with varying values found. The sensitivity with respect to $D_2$ (marital status) is non-constant over the value of the vehicle, and has a similar curve for all categories of hours driven. These sensitivities stem from the construction of the simulated losses, as $D_1$ has a linear relationship with the losses $Y$ through $X_1$ being categorically dependent on $D_1$, while $Y$ has complex dependency on $D_2$ via $X_2$.

\begin{figure}[h]
    \centering
    \includegraphics[width = 0.6\textwidth]{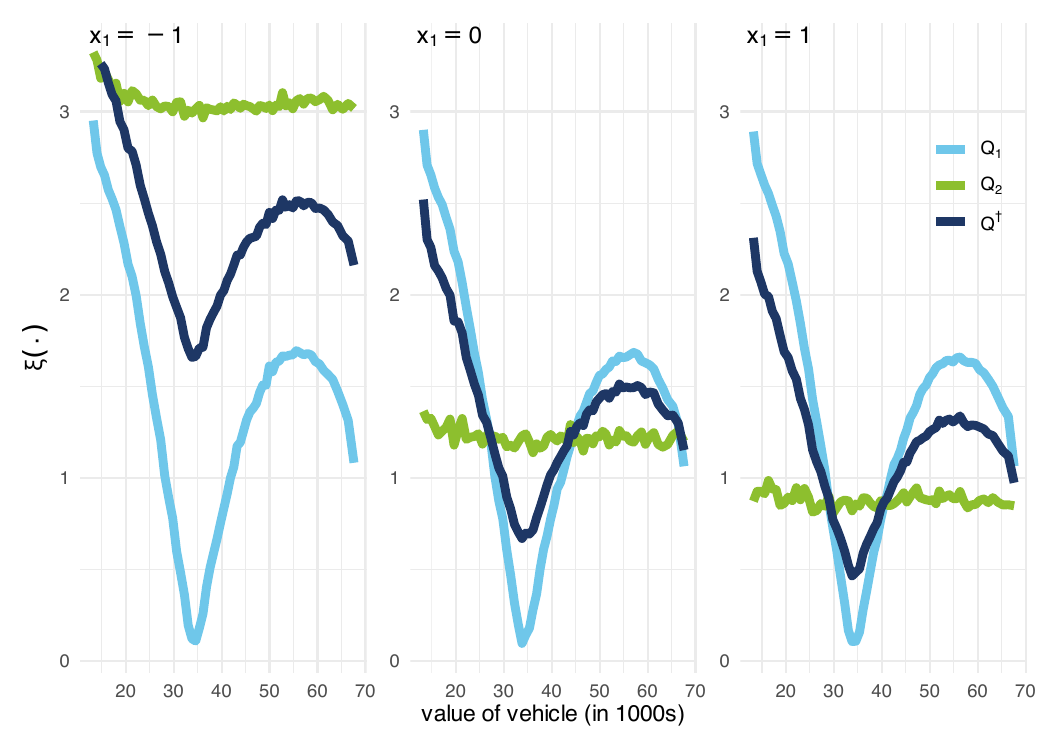}
    \caption{Total sensitivities $\sum_{i = 1}^2\partial_{D_i}\rho^{\circ}(Y|_x)$ conditional jointly on $X_1$ (hours driven) and $X_2$ (vehicle value) under the marginally-insensitive pricing measures, $\Q_1$, and $\Q_2$, and the constrained barycentre measure $\Qd$. Light blue curve corresponds to the sensitivity under measure $\Q_1$, green to the sensitivity under $\Q_2$, and dark blue to the sensitivity under $\Q^\dagger$.}
    \label{fig: sens_Qd}
\end{figure}

Figures \ref{fig: sens_Qd} and \ref{fig: sens_Qstar} displays the total sensitivity of the premia conditional on both permitted covariates in the same panel convention as \Cref{fig: sens_P}.  By construction of the respective measures, $\Q_1$ and $\Q^*_1$ are insensitive to $D_1$, likewise $\Q_2$ and $\Q^*_2$ are insensitive to $D_2$.  Hence, the total sensitivity only has contributions from the other protected covariate.  In \Cref{fig: sens_Qd}, the marginally-insensitive and constrained barycentre pricing measures all retain some amount of sensitivity, as these measure are not constrained to have total sensitivity $0$. This is similarly true in \Cref{fig: sens_Qstar}, where $\Q^*_1$ and $\Q^*_2$ have non-zero sensitivity. This is in contrast to $\Q^*$, which enforces zero sensitivity to both protected covariates, and thus has zero total sensitivity.

\begin{figure}[h]
    \centering
    \includegraphics[width = 0.6\textwidth]{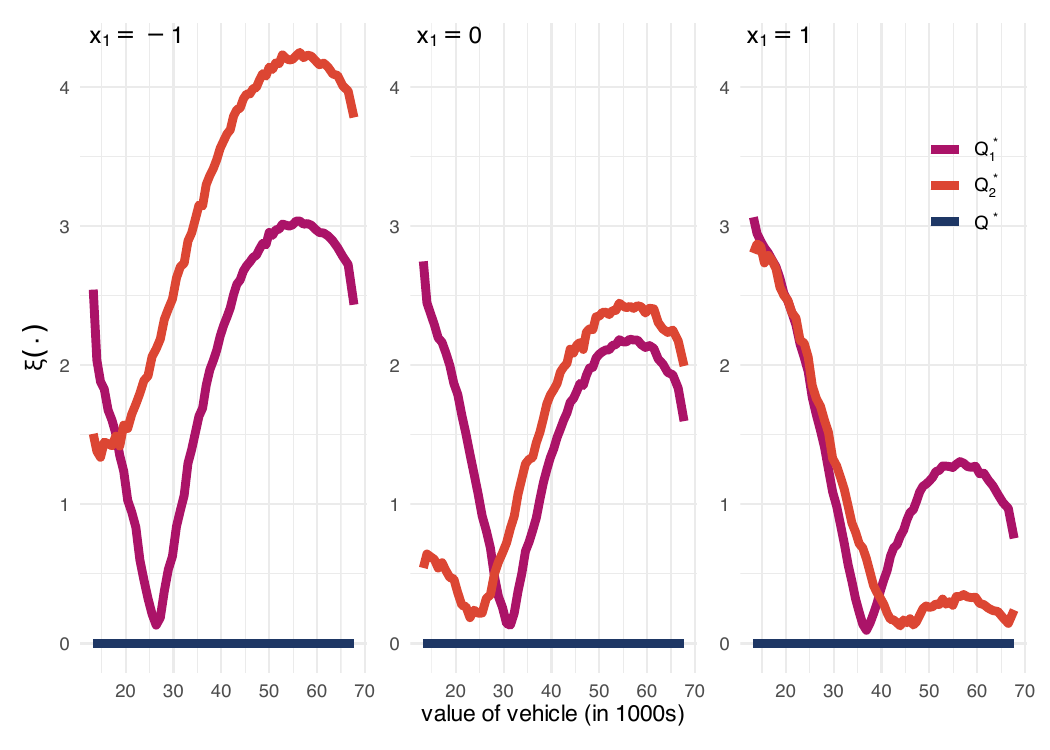}
    \caption{Total sensitivities $\sum_{i = 1}^2\partial_{D_i}\rho^{\circ}(Y|_x)$ conditional jointly on $X_1$ (hours driven) and $X_2$ (vehicle value) under the discrimination-insensitive pricing measures, $\Q^*_1$, $\Q^*_2$, and $\Q^*$. Fuchsia curve corresponds to the sensitivity under measure $\Q^*_1$, red to the sensitivity under $\Q^*_2$, and dark blue to the sensitivity under $\Q^*$.}
    \label{fig: sens_Qstar}
\end{figure}

\begin{figure}[h]
    \centering
    \includegraphics[width = 0.6\textwidth]{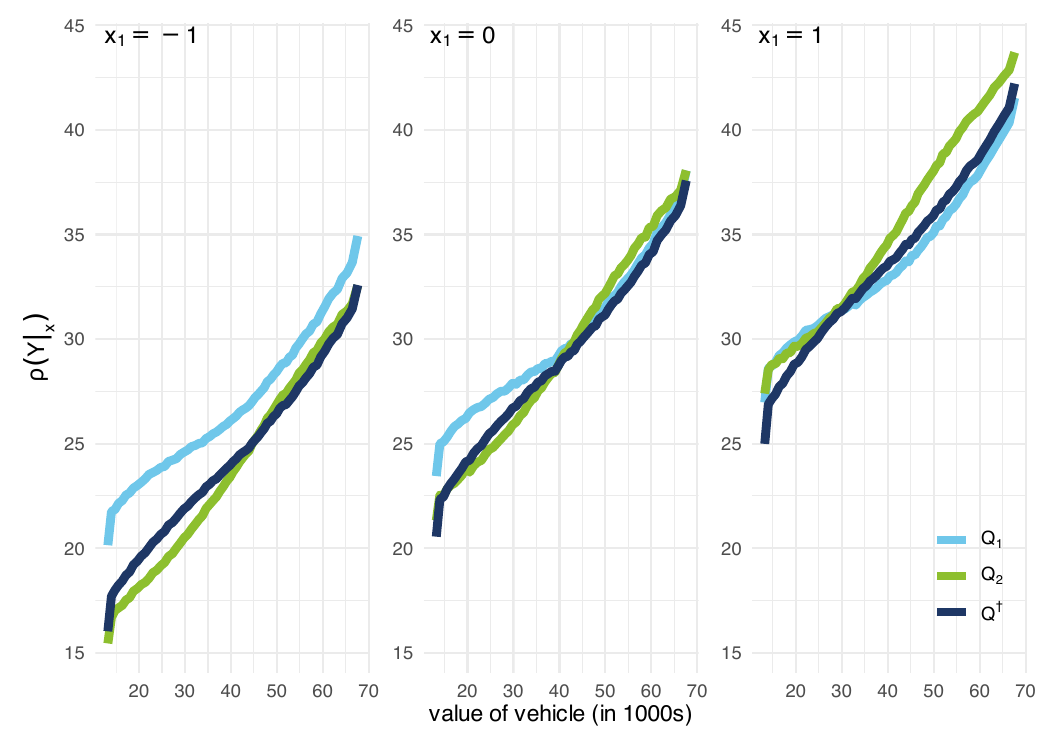}
    \caption{Premia $\rho^{\circ}(Y|_x)$ conditional jointly on $X_1$ (hours driven) and $X_2$ (vehicle value) under the marginally-insensitive pricing measures, $\Q_1$, and $\Q_2$, and the constrained barycentre measure $\Qd$. Light blue curve corresponds to the premium under measure $\Q_1$, green to the premium under $\Q_2$, and dark blue to the premium under $\Q^\dagger$.}
    \label{fig: rho_Qmarginals_Qdag}
\end{figure}

Figures \ref{fig: rho_Qmarginals_Qdag} and \ref{fig: rho_Qstars} display the premia under the measures discussed, with the same panel convention as previous figures.  In all three panels, the premium is monotonically increasing with respect to the permitted covariate $X_2$, consistent with the loss being increasing in $X_2$.  In \Cref{fig: rho_Qmarginals_Qdag}, we can see that the premium under $\Qd$ is not a pure interpolation of $\Q_1$ and $\Q_2$, as we also enforce the conditional expectation constraint, which is absent from $\Q_1$ and $\Q_2$. The premium under the constrained barycentre pricing measure appears flattened compared to the marginally-insensitive measures. This is in contrast to \Cref{fig: rho_Qstars}, in which the conditional premia appear less linear than in \Cref{fig: rho_Qmarginals_Qdag}.

\begin{figure}[h]
    \centering
    \includegraphics[width = 0.6\textwidth]{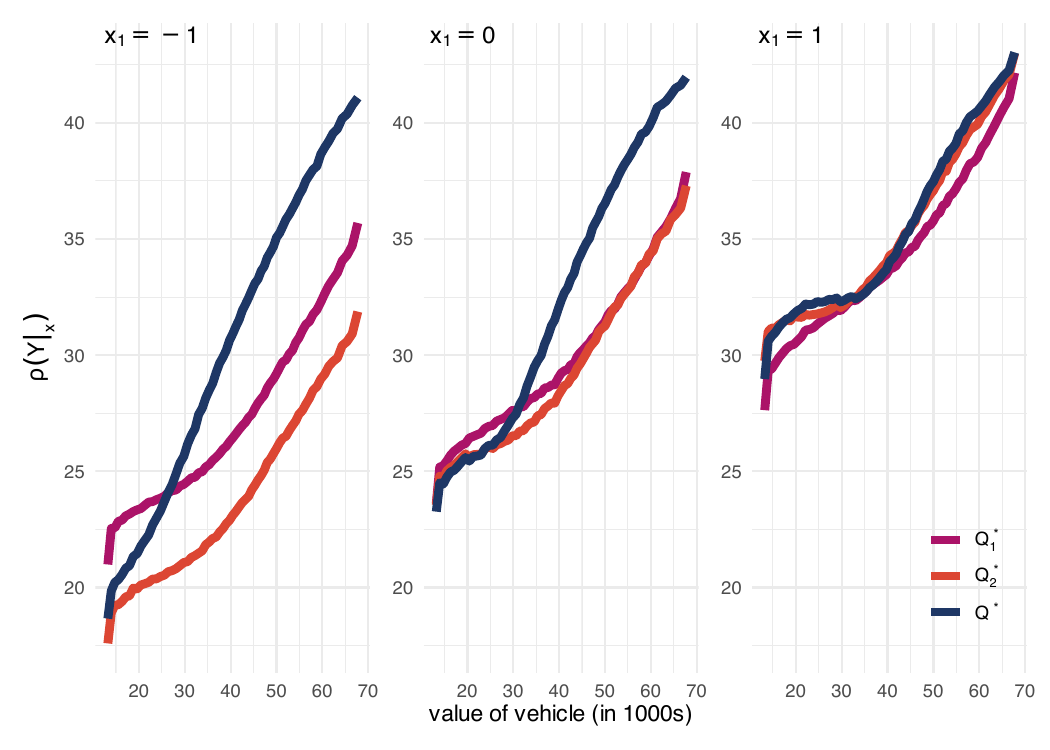}
    \caption{Premia $\rho^{\circ}(Y|_x)$ conditional jointly on $X_1$ (hours driven) and $X_2$ (vehicle value) under the discrimination-insensitive pricing measures, $\Q^*_1$, $\Q^*_2$, and $\Q^*$. Fuchsia curve corresponds to the premium under measure $\Q^*_1$, red to the premium under $\Q^*_2$, and dark blue to the premium under $\Q^*$.}
    \label{fig: rho_Qstars}
\end{figure}

\begin{table}[h]
\centering
\begin{tabular}{@{}lccccccc@{}}

& $\P$  & $\Q^*_1$ & $\Q^*_2$ & $\Q^*$ & $\Q_1$ & $\Q_2$ & $\Qd(0.5, 0.5)$ \\[0.5em] 
\toprule\toprule
$\xi_1(\cdot)$ &  1.69  & 0    &  1.89 & 0  & 0   &   1.72    &   0.89\\[0.5em]
$\xi_2(\cdot)$& 1.39  & 1.65    &0 &   0    &    1.40    &   0   &   0.76\\[0.5em]
$\xi(\cdot)$ & 3.07   & 1.65    & 1.89 &  0    & 1.40    &  1.72    &   1.64\\ \bottomrule
\end{tabular}
\caption{Marginal and total sensitivities to the discriminatory covariates under varying measures.}
\label{tab: sensitivities}
\end{table}

\begin{figure}[h]
    \centering
    \includegraphics[width = 0.6\textwidth]{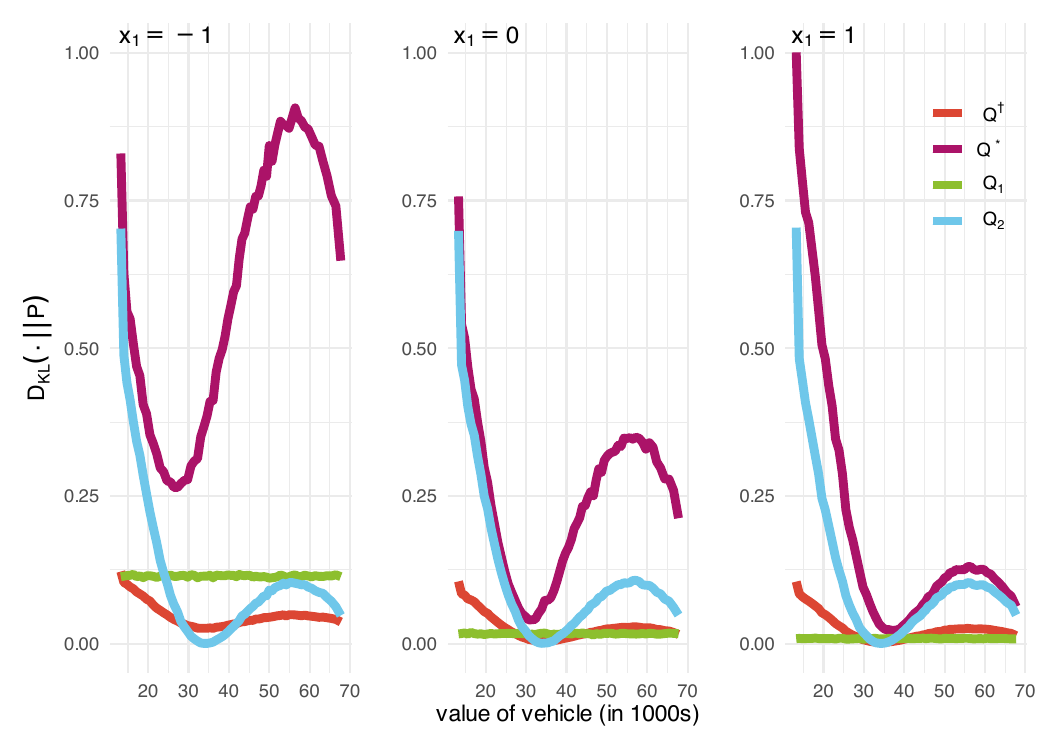}
    \caption{KL-divergence between varying measures and the reference measure $\P$, conditional jointly on $X_1$ (hours driven) and $X_2$ (vehicle value) for $\Q_1$ (green), $\Q_2$ (blue), $\Q^*$ (fuchsia), and $\Q^\dagger$ (red).}
    \label{fig: KL}
\end{figure}

\Cref{tab: sensitivities} reports the sensitivities of the premium under the measures discussed.  Here, the reference measure $\P$ has the greatest overall sensitivity, while the discrimination-insensitive measure $\Q^*$ has, by construction, the least. Second smallest total sensitivity is of $\Q_1$, though the marginally-insensitive measure does not impose the expectation constraint. 
Additionally, \Cref{fig: KL} illustrates the KL divergence or ``cost" of the change-of-measure.  In particular, it can be seen here that enforcing both discrimination-insensitive constraints, as well as the expectation constraint in $\Q^*$ results in the largest divergence from the reference measure $\P$, when compared to the measures $\Q_1$, $\Q_1$, which enforce one discrimination-insensitivity constraint each, and $\Qd$, which only enforces an expectation constraint.
\vspace{1em}

\subsection{Choosing the barycentre weights}
\label{subsec: choosing-weights}
The choice of weights for use in the constrained barycentre pricing measure can be motivated by business goals or best practices.  However, if the weights are not a priori selected, we propose a proportional reduction of sensitivity as follows. We find the weights $(\pi_1, \pi_2)$ that reduce the sensitivity with respect to $D_1$ and $D_2$ proportionally to the overall sensitivity under the reference model $\P$. The weights can then be found by minimising the squared difference, as follows

\begin{equation*}
\min_{\substack{(\pi_1, \pi_2)\\[0.2em] \pi_1 + \pi_2 = 1}}\left(\frac{\abs{\xi_1 (\P) - \xi_1 (\Qd(\pi_1, \pi_2))}}{\xi_1(\P)} - \frac{\abs{\xi_2 (\P) - \xi_2 (\Qd(\pi_1, \pi_2))}}{\xi_2(\P)}\right)^2
\end{equation*}

Under this scheme in the scenario discussed in \Cref{sec: application}, the optimal weights are found to be $(0.58, 0.42)$.  We compare the sensitivities with choices of weights in \Cref{tab: sensitivities_weights}.
\begin{table}[h]
\centering
\begin{tabular}{@{}lccc@{}}

&$\Qd(0.2, 0.8)$ & $\Qd(0.58, 0.42)$ & $\Qd(0.8, 0.2)$ \\[0.5em] 
\toprule\toprule
$\xi_1(\cdot)$ & 1.37 & 0.76  & 0.38  \\[0.5em]
$\xi_2(\cdot)$ &0.33 &  0.86 &  1.14 \\[0.5em]
$\xi(\cdot)$ &1.70 &   1.62  &  1.52 \\ \bottomrule
\end{tabular}
\caption{Marginal and total sensitivities to the discriminatory covariates under the constrained barycentre pricing measure, with varying weights.}
\label{tab: sensitivities_weights}
\end{table}
\begin{figure}[h]
    \centering
    \includegraphics[width = 0.6\textwidth]{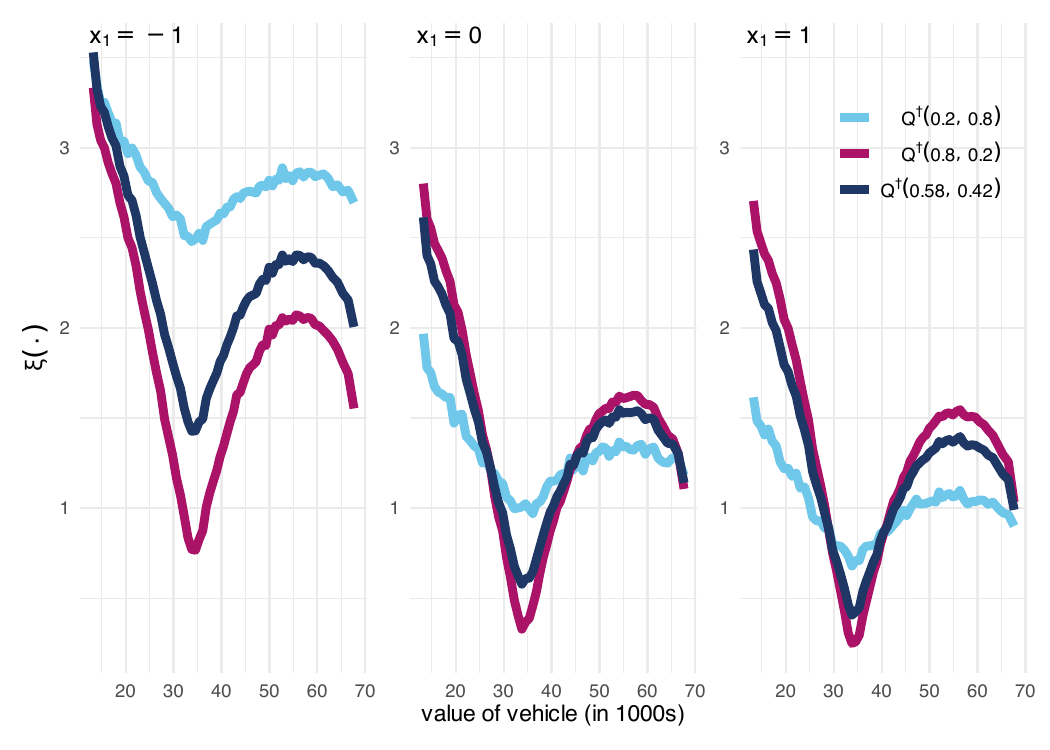}
    \caption{Total sensitivities $\sum_{i = 1}^2\partial_{D_i}\rho^{\circ}(Y|_x)$ conditional jointly on $X_1$ (hours driven) and $X_2$ (vehicle value) under the discrimination-insensitive pricing measures, $\Q^*_1$, $\Q^*_2$, and $\Q^*$. Fuchsia curve corresponds to the sensitivity under measure $\Q^*_1$, red to the sensitivity under $\Q^*_2$, and dark blue to the sensitivity under $\Q^*$.}
    \label{fig: sens_Qd_weights}
\end{figure}

\begin{figure}[th]
    \centering
    \includegraphics[width = 0.6\textwidth]{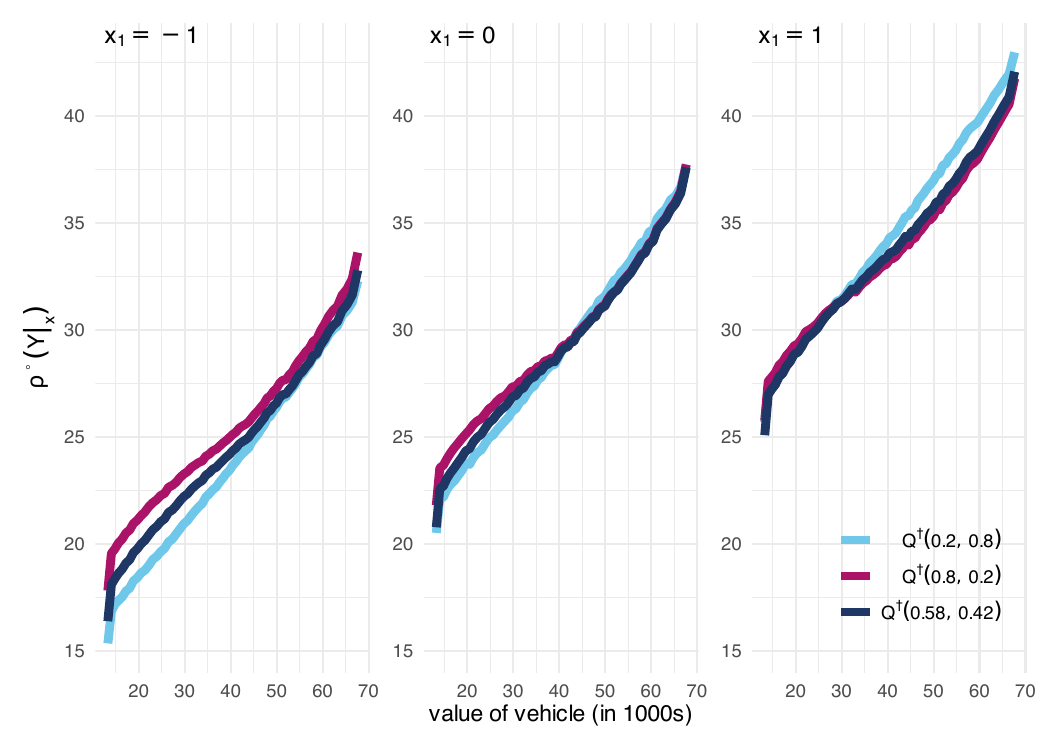}
    \caption{Premia $\rho^{\circ}(Y|_x)$ conditional jointly on $X_1$ (hours driven) and $X_2$ (vehicle value) under the constrained barycentre pricing measure $\Qd$ with varying weights. Light blue curve corresponds to the premium under measure $\Qd(0.2, 0.8)$, fuchsia to the premium under $\Qd(0.8,0.2)$, and dark blue to the premium under $\Qd(0.58, 0.42)$.}
    \label{fig: rho_Qd_weights}
\end{figure}

We observe that the total sensitivities of the constrained barycentre pricing measures follow the weights chosen, where if less weight is placed on $\Q_1$, then the resultant constrained barycentre pricing measure will have more total sensitivity. This is due to the sensitivity found in $\Q_1$ and $\Q_2$, where $\Q_2$ has less sensitivity than $\Q_1$.  Notably, this is not a pure interpolation between the two total sensitivities, as $\Qd$ must have the same conditional expectations as $\P$.

In Figures \ref{fig: sens_Qd_weights} and \ref{fig: rho_Qd_weights}, total sensitivities and premia under the barycentre pricing measure are plotted with varying weights.  In both plots, the sensitivity and premia preserve the same curve for all choices of weights.  A more extreme division of weights, i.e. one weight close to $1$, lead to more extreme sensitivities and premia.

\section{Conclusion}
This paper develops a methodology for pricing financial products that are insensitive to protected characteristics of the clients, such as age or ethnicity. By formulating an optimisation problem that finds the nearest (in Kullback-Leibler divergence) measure that has sensitivity zero to the protected covariates and maintains the same conditional expectation as the reference measure, we derive a novel pricing measure, termed the ``discrimination-insensitive measure". We show that this discrimination-insensitive measure exists and is unique under mild conditions.  In settings where the discrimination-insensitive measure is inappropriate for use in pricing, we propose a two-step procedure to that first finds ``marginally-insensitive" measures, which are then reconciled via a constrained barycentre model.  Here, representation, existence, and uniqueness are also established for the constrained barycentre measure.  As a intermediary step, we prove the solution of the pure (right-) Kullback-Leibler divergence barycentre in a general probability measure setting. Numerical experiments compare the discrimination-insensitive, marginally-insensitive, and constrained barycentre measures to one another, and to other fairness-seeking premia in the literature.

\ACKNOWLEDGMENT{The authors thank the Canadian Institute of Actuaries (CIA) for their financial support. KM further acknowledges support from the Ontario Graduate Scholarship and the Spencer Education Foundation. SP gratefully acknowledges support from the Natural Sciences and Engineering Research Council of Canada (grant RGPIN-2025-05847) and from the Data Science Institute at the University of Toronto (DSI-CGY3R1P03).}

\bibliographystyle{informs2014}
\bibliography{Refs.bib}
\end{document}